\documentclass{article}

\usepackage{fullpage}
\usepackage{color}
\usepackage{amssymb}
\usepackage{amsthm}
\usepackage{amsmath}
\usepackage[ruled,vlined]{algorithm2e}
\usepackage{authblk}

\newcommand{\bx}{{\bf{x}}}
\newcommand{\by}{{\bf{y}}}

\begin{document}

\title{Online Vertex Cover and Matching:\\ Beating the Greedy Algorithm}


\author[1]{Yajun Wang\thanks{yajunw@microsoft.com}}
\author[2]{Sam Chiu-wai Wong\thanks{samwong@mit.edu. Part of the work is done when the author was visiting Microsoft Research Asia.}}
\affil[1]{Microsoft Research Asia, Beijing, P.~R.~China}
\affil[2]{Massachusetts Institute of Technology}


\maketitle
\begin{abstract}


In this paper, we {\em explicitly} study the online vertex cover problem, which is a natural generalization of the well-studied ski-rental problem. In the online vertex cover problem, we are required to maintain a monotone vertex cover in a graph whose vertices arrive online. When a vertex arrives, all its incident edges to previously arrived vertices are revealed to the algorithm. For bipartite graphs with the left vertices offline (i.e. all of the left vertices arrive first before any right vertex), there are algorithms achieving  the optimal competitive ratio of $\frac{1}{1-1/e}\approx 1.582$.

Our first result is a new optimal {\em water-filling} algorithm for this case. One major ingredient of our result is a new charging-based analysis, which can be generalized to attack the
online {\em fractional} vertex cover problem in general graphs. The main contribution of this paper is a 1.901-competitive algorithm for this problem. 
When the underlying graph is bipartite, our fractional solution can be rounded to an integral solution. In other words, we can obtain a vertex cover with expected size at most $1.901$ of the optimal vertex cover in bipartite graphs.

The next major result is a primal-dual analysis of our algorithm for the online fractional vertex cover problem in general graphs.  
This is more than just yet another analysis as it implies the dual result of a 0.526-competitive algorithm for online {\em fractional} matching in general graphs. Notice that both problems admit a well-known 2-competitive greedy algorithm. Our result in this paper is the first successful attempt to beat the greedy algorithm for these two problems.

Our result on the online matching problem significantly generalizes the traditional online bipartite graph matching problem, where vertices from only one side of the bipartite graph arrive online. In particular, our algorithm improves upon the result of the fractional version of the online edge-selection problem in Blum et. al. (JACM '06), where only the greedy algorithm with competitive ratio $1/2$ is previously known.

On the hardness side, we show that no randomized online algorithm can achieve a competitive ratio better than 1.753 and 0.625 for the online fractional vertex cover problem and the online fractional matching problem respectively, even for bipartite graphs.
\end{abstract}

\newtheorem{lemma}{Lemma}
\newtheorem{theorem}{Theorem}
\newtheorem{corollary}{Corollary}
\newtheorem{proposition}{Proposition}

\section{Introduction}

In this paper, we study the online vertex cover problem in bipartite and general graphs. Given a graph $G=(V,E)$, $C\subseteq V$ is a vertex cover of $G$ if all edges in $E$ are incident to $C$. In the online setting, the vertices of $V$ arrive one at a time. When a vertex arrives, its edges incident to the previously arrived neighbors are revealed. 
We are required to maintain a {\em monotone} vertex cover for the revealed subgraph at all time. In particular, no vertices can be removed from the cover once added. The objective is to minimize the size of the final vertex cover.

Our study of the online vertex cover problem is motivated by two apparently unrelated lines of research in the literature, namely ski rental and online bipartite matching. 

\paragraph{Online bipartite matching.}
The online bipartite matching problem has been intensively studied over the past decade. An instance of this problem specifies a bipartite graph $G=(L,R,E)$ in which the set of left vertices $L$ is known in advance, while the set of right vertices $R$ and edges $E$ are revealed over time. An algorithm maintains a monotone matching that is empty initially. At each step, an online vertex $v\in R$ arrives and all of its incident edges are revealed. An algorithm must immediately and irrevocably decide if $v$ should be matched to a previously unmatched vertex in $N(v)$. The objective is to maximize the size of the matching found at the end.

This problem and almost all of its variants studied in the literature share the common feature that vertices of only one side of the bipartite graph arrive online. While this property indeed holds in many applications, it does not necessarily reflect the reality in general. We exemplify this by the following application:
\begin{itemize}

\item {\bf Online market clearing.} In a commodity market, buyers and sellers are represented by the left and right vertices. An edge between a buyer and seller indicates that the price that the buyer is willing offer is higher than the price at which the seller is willing to take. The objective is to maximize the number of trades, or the size of the matching. In this problem, both the buyers and sellers arrive and leave online continuously.

\end{itemize}
Thus a more general model of online bipartite matching is to allow all vertices to be online. In this paper, we obtain the first non-trivial algorithm for the fractional version of this generalization. Our algorithm is 0.526-competitive and, in fact, also works in general graphs.

\paragraph{Ski rental and online bipartite vertex cover.}
The ski rental problem is perhaps one of the most studied online problems. Recall that in this problem, a skier goes on a ski trip for $N$ days but has no information about $N$. On each day he has the choice of renting the ski equipment for 1 dollar or buying it for $B>1$ dollars. His goal is to minimize the amount of money spent.

We consider the online bipartite vertex cover problem, which is a generalization of ski rental. The setting of this problem is exactly identical to that of online bipartite matching except that the task is to maintain a monotone vertex cover instead. Ski rental can be reduced to online bipartite vertex cover via a complete bipartite graph with $B$ left vertices and $N$ right vertices. One may view this problem as ski rental with a combinatorial structure imposed.
We show that the optimal competitive ratio of online bipartite vertex cover is $\frac{1}{1-1/e}$. In other words, we still have the same performance guarantee even though the online bipartite vertex cover problem is considerably more general than the ski rental problem.

\paragraph{The connection.}
Recall that bipartite matching and vertex cover are dual of each other in the offline setting. It turns out that the analysis of an algorithm for online bipartite fractional matching in~\cite{Buchbinder2007} implies an optimal algorithm for online bipartite vertex cover. On the other hand, online bipartite vertex cover generalizes ski rental. This connection is especially interesting because online bipartite matching does not generalize ski rental but is the dual of its generalization\footnote{Coincidentally, the first papers on online bipartite matching and ski rental were both published in 1990 but to our knowledge, their connection was not realized, or at least explicitly stated.}.

\paragraph{The greedy algorithm.}
There is a simple well-known greedy algorithm for online matching and vertex cover in general graphs. As each vertex arrives, we match it to an arbitrary unmatched neighbor (if any) and put both of them into the vertex cover. It is easy to show that this algorithm is $1/2$-competitive for online matching and $2$-competitive for online vertex cover.

The greedy algorithm for the vertex cover problem is optimal assuming the Unique Game Conjecture even in the offline setting~\cite{Khot2008}. Thus there is no hope of doing better than 2 if we strict ourselves to integral vertex covers in general graphs.  For the other problems studied in this paper, e.g. matching and vertex cover in bipartite graphs and matching in general graphs, no known algorithm beats the greedy algorithm in the online setting.

We present the first successful attempt in breaking the barrier of 2 (or 1/2) achieved by the greedy algorithm. In the fractional setting, our algorithm is $1.901$-competitive (against the minimum fractional cover) for online vertex cover and $\frac{1}{1.901}\approx 0.526$-competitive (against the maximum fractional matching) for online matching in general graphs. It is possible to convert the fractional algorithm to a randomized integral algorithm for online vertex cover in bipartite graphs. On the other hand, it is not clear whether it is possible to round our algorithm or its variants for online matching in either bipartite graphs or general graphs.

We stress that the fractional setting is still of interest for two reasons:

\begin{itemize}
\item As well-articulated in~\cite{Buchbinder2007}, some commodities are divisible and hence should be modeled as fractional matchings. In fact, for divisible commodities one would even prefer a fractional matching assignment since the maximum fractional matching may be larger than the maximum integral matching in general graphs. Thus a $c$-competitive algorithm against fractional matching would be preferable to a $c$-competitive algorithm against integral matching.
\item Our 0.526-competitive algorithm for fractional matching suggests that it may be possible to beat the greedy algorithm for online integral matching in the oblivious adversarial model.
\end{itemize}

%
%
\subsection{Our results and techniques}

Our algorithms rely on a charging-based algorithmic framework for online vertex cover-related problems. The following results on vertex cover were obtained using this method:
\begin{itemize}
\item A new optimal $\frac{1}{1-1/e}$-competitive algorithm for online bipartite vertex cover\footnote{A similar algorithm is implied by the analysis of the algorithm for online bipartite fractional matching in~\cite{Buchbinder2007}.}.
\item A 1.901-competitive algorithm for online fractional vertex cover in general graphs.
\end{itemize}

We stress that the fact that our result holds only for the \emph{fractional} version of online vertex cover in general graphs is reasonable. In fact, even in the offline setting, the best known approximation algorithm for minimum vertex cover is just the simple 2-approximate greedy algorithm. Getting anything better than 2 would disprove the Unique Game Conjecture even in the offline setting~\cite{Khot2008} and have profound implications to the theory of approximability.

Our algorithms can also be analyzed in the prime-dual framework~\cite{Buchbinder2007}. As by-products, we obtain dual results on the maximum matching as follows:
\begin{itemize}
\item A 0.526-competitive algorithm for online fractional matching in general graphs. This improves the result on the online edge-selection problem studied in~\cite{Blum2006}.
\end{itemize}

All of these results also hold in the vertex-weighted setting (for vertex cover) and the b-matching setting.
Section~\ref{sec:rounding} explains how to convert essentially any algorithm for online {\em fractional} vertex cover to an algorithm for online {\em integral} vertex cover in the case of bipartite graphs with the same (expected) performance. 



On the hardness side, we establish the following lower bound (for vertex cover) and upper bound (for matching) on the competitive ratios. Notice that these bounds also apply to the integral version of the problems.
\begin{itemize}
\item A lower bound of $1+\sqrt{\frac{1}{2}\left(1+\frac{1}{e^2}\right)}\approx 1.753$ for the  online {\em fractional} vertex cover problem in bipartite graphs.
\item An upper bound of 0.625 for the online {\em fractional} matching problem in bipartite graphs. 
\end{itemize}

\paragraph{Main ingredients.}
Our result is based on a novel charging-based analysis of a new {\em water-filling} algorithm for the online bipartite vertex cover problem. In the {\em water-filling} algorithm, for each online vertex, we are allowed to use water of amount at most $\frac{1}{1-1/e}$ to cover the new edges. (Recall that in the original {\em water-filling} algorithm for matching, the amount of water is at most $1$.) In our charging scheme, for an online vertex in the optimal cover, we charge all the water used in processing this vertex to itself. For an online vertex not in the optimal cover, we charge the water spent on the online vertex to its neighbors, which must be in the optimal cover. In particular, in the bipartite graph case with one-sided online vertices, an online vertex in the optimal cover will take care of the cost processing itself wherea an offline vertex in the optimal cover is responsible for the charge from its online neighbors. 

In generalizing the charging scheme to the two-sided online bipartite and the general graph cases, a vertex must take care of both the cost in processing itself and the charges received from future neighbors. In such generalizations, we cannot use a fixed amount of water in processing each vertex. A key insight behind our algorithm is that the amount of water used should be related to the actual final water level. In other words, for a final water level $y$, the amount of water used should be $f(y)$ for some allocation function $f(\cdot)$. By extending our previous charging scheme, the competitive ratio of our new water-filling algorithm for the online fractional vertex cover problem in general graphs will be a function of $f(\cdot)$. We also derive the constraints which $f(\cdot)$ must satisfy in order to make the analysis work.

As a result, we are left with a non-conventional minimax optimization problem. (See Eqn.(\ref{eqn:opt}).) The most exciting part, however, is that we can actually solve this optimization problem {\em optimally}.\footnote{Our solution is optimal in {\em our framework}. It may not be optimal for the online fractional vertex cover problem. } The optimal allocation function in Theorem~\ref{thm:vcgeneral} implies a competitive ratio of $1.901$ for the online fractional vertex cover problem in general graphs.
Our primal-dual analysis for the online fractional matching problem in general graphs is obtained by reverse-engineering the charging-based analysis.

\paragraph{Remark:} In retrospect, it may be much harder to directly develop a water-filling algorithm for online matching in general graphs. 
Firstly, it may take some work to realize that the amount of water used should be variable rather than 1 as in online bipartite matching. On the contrary, in vertex cover, the amount of water is already variable even for the basic one-sided online bipartite vertex cover. Secondly, to analyze a water-filling algorithm on the matching, one has to optimize over the allocation function, which specifies the total amount of water used as a function of the water level, and another function which updates the potentials of the dual variables. As a consequence, the competitive ratio would be an optimization problem involving {\em two variable functions}! In fact, if we reverse engineer a water-filling algorithm on the matching from our solution, the corresponding allocation function does not have a known closed form.
Thus our charging-based analysis for online vertex cover is a critical step in developing the algorithms. Our starting point, the vertex cover, turns out to be a surprising blessing.

\subsection{Previous work}
There are three lines of research related to our work. The first two categories discussed below are particularly relevant.

\paragraph{Online matching.}
The online bipartite matching problem was first studied in the seminal paper by Karp et al.~\cite{Karp1990}. They gave an optimal $1-1/e$-competitive algorithm. Subsequent works studied its variants such as $b$-matching~\cite{kalyanasundaram2000optimal}, vertex weighted version~\cite{Aggarwal2011,devanurrandomized}, adwords~\cite{Buchbinder2007,DevenurH09,Mehta2007,devanurrandomized, devanur2012online,goel2008online,Aggarwal2011} and online market clearing~\cite{Blum2006}. Water-filling algorithms have been used for a few variants of the online bipartite matching problem (e.g. ~\cite{kalyanasundaram2000optimal,Buchbinder2007}).

Another line of research studies the problem under more relaxed adversarial models by assuming certain inherent randomness in the inputs~\cite{Feldman2009,Manshadi2011,Mahdian2011, Karande2011}. Online matching for general graphs have been studied under similar stochastic models~\cite{bansal2010lp}. To our knowledge, there is no result on this problem in the more restricted adversarial models other than the well-known $1/2$-competitive greedy algorithm, even for just bipartite graphs with vertices from both sides arriving online~\cite{Blum2006}.

Analyzing greedy algorithms for maximum matching in the offline setting is another related research area. Aronson et al.~\cite{Aronson1995} showed that a randomized greedy algorithm is a $\frac{1}{2}+\frac{1}{400,000}$-approximation. The factor was recently improved to $\frac{1}{2}+\frac{1}{256}$~\cite{poloczek12}. A new greedy algorithm with better ratio was presented in~\cite{goel12}. Our 0.526-competitive algorithm for online fractional matching complements these results. 

\paragraph{Ski rental.} The ski rental problem was first studied in~\cite{karlin1988competitive}. Karlin et al. gave an optimal $\frac{1}{1-1/e}$-competitive algorithm in the oblivious adversarial model~\cite{Karlin1994}. There are many generalizations of ski rental. Of particular relevance are multislope ski rental~\cite{Lotker2008}  and TCP acknowledgment~\cite{Karlin2001}, where the competitive ratio $\frac{1}{1-1/e}$ is still achievable. 
The online vertex-weighted bipartite vertex cover problem presented in this paper is also of this nature and, in fact, further generalizes multislope ski rental, as shown in Appendix~\ref{sec:multislope}.

\paragraph{Online covering.} Another line of related research deals with online integral and fractional covering programs of the form $\min\{ cx\mid Ax\geq 1,0\leq x\leq u\}$, where $A\geq 0,u\geq 0$, and the constraints $Ax\geq 1$ arrive one after another~\cite{Buchbinder2009}. Our online vertex cover problem also falls under this category. The key difference is that the online covering problems are so general that the optimal competitive ratios are usually not constant but logarithmic in some parameters of the input.

Finally, online vertex cover for general graphs was studied by Demange et al.~\cite{Demange2005} in a model substantially different from ours. Their competitive ratios are characterized by the maximum degree of the graph.

\section{Preliminaries}

Given $G=(V,E)$, a vertex cover of $G$ is a subset of vertices $C\subseteq V$ such that for each edge $(u,v) \in E$, $C\cap \{u,v\} \neq \emptyset$. A matching of $G$ is a subset of edges $M\subseteq E$ such that each vertex $v \in V$ is incident to at most one edge in $M$.

$\by\in [0,1]^V$ is a fractional vertex cover if for any edge $(u,v)\in E$, $y_u+y_v \geq 1$. We call $y_v$ the {\em potential} of $v$.  $\bx\in [0,1]^E$ is a fractional matching if for each vertex $u\in V$, $\sum_{v\in N(u)} x_{uv} \leq 1$. It is well-known that vertex cover and matching are dual of each other.

{\bf LPs for fractional vertex cover and matching.}

\begin{center}
\begin{tabular}{ | r l | r l | }
\hline
Primal (Matching): & & Dual (Vertex Cover): &\\
 & $\max\sum_{e\in E} x_e$ &  & $\min\sum_{v\in V}y_v$ \\
s.t. & $ x_v:=\sum_{u\in N(v)} x_{uv}\leq 1,\,\forall v\in V$ & s.t. & $y_u+y_v\geq 1,\,\forall (u,v)\in E$ \\
&  $x\geq 0$ & & $y\geq 0$ \\
\hline
\end{tabular}
\end{center}

In this paper, the matching and vertex cover LPs are called the primal and dual LPs, respectively. By weak duality, we have $$\sum_{e\in E} x_e\leq\sum_{v\in V}y_v $$ for any feasible fractional matching $\bx$ and vertex cover $\by$.

%
%
%

{\bf Competitive analysis.}
We adopt the competitive analysis framework to measure the performance of online algorithms. The size of the vertex cover  (or matching) found by an algorithm is compared against the (offline) optimal solution, in the worst case.

An algorithm, possibly randomized, is said to be \textit{$c$-competitive} if for any instance, the size of the solution $ALG$ found by the algorithm and the size of the optimal solution $OPT$ satisfy 
\[
\mathbb{E}[ALG] \leq c\cdot OPT\,\mathrm{ or }\,\mathbb{E}[ALG]\geq  c\cdot OPT
\]
depending on whether the optimization is a minimization or maximization problem. The constant $c$ is called the \textit{competitive ratio}.

A few different adversarial models have been considered in the literature. In this paper, we focus on the {\em oblivious adversarial} model, in which the adversary must specify the input once-and-for-all at the beginning and is not given access to the randomness used by the algorithm. 

\subsection{Algorithms for online vertex cover and matching}
In the online setting, the vertices of $G$ arrive one at a time in an order determined by the adversary. When an online vertex $v$ arrives, all of its edges incident to the {\em previously arrived} vertices are revealed. We denote the set of arrived vertices by $T\subset V$ and $G(T)$ is the subgraph of $G$ induced by $T$.



An algorithm for online integral matching maintains a monotone matching $M$. As each vertex $v$ arrives, it must decide if $(u,v)$ should be added to $M$ for some previously unmatched $u\in N(v)\cap T$, where $N(v)$ is neighbors of $v$ in $G$. No edge can be removed from $M$. The objective is to maximize the size of the final matching $M$. For online fractional matching, a fractional matching $\bx$ for $G(T)$ is maintained and at each step, $x_{uv}$ must be initialized for $u\in N(v)\cap T$ so that $\bx$ remains a fractional matching. The objective is to maximize the final $\sum_{e\in E} x_e$.

An algorithm for online integral vertex cover maintains a monotone vertex cover $C$. As each vertex $v$ arrives, it must insert a subset of $\{v\}\cup N(v)\backslash C$ into $C$ so that it remains a vertex cover. No vertex can be removed from $C$. The objective is to minimize the size of the final cover $C$. For online fractional vertex cover, a fractional vertex cover $\by$ for $G(T)$ is maintained and at each step, we must initialize $y_v$ and possibly increase some $y_u$ for $u\in T$ so that $\by$ remains a fractional vertex cover. The objective is to minimize the final $\sum_{v\in V}y_v$.

To simplify the terminology, we refer to the online vertex cover (matching) problem as the instances where all vertices in the graph arrive online. On the other hand, to be conformal with the existing terminology in the literature, we refer to the online {\em bipartite} vertex cover (matching) problem as the instances where the graph is bipartite and only one side of the vertices arrive online. This is the traditional case studied in the literature.

{\bf Weighted vertex cover and b-matching.} Our results can be generalized to cases of weighted vertex cover and b-matching.
For vertex cover, the objective function becomes $\sum_{v\in C}w_v$ (integral) or $\sum_{v\in V}w_vy_v$ (fractional), where $w_v\geq 0$ are weights on the vertices that are revealed to the algorithm when $v$ arrives.

For b-matching, the only difference is that each vertex can be matched up to $w_v\in\mathbb{N}$ times instead of just 1 (integral) or the constraint $x_v:=\sum_{u\in N(v)}x_{uv}\leq w_v$, where $w_v\geq 0$, replaces $x_v\leq 1$ (fractional). See below the LP formulation of the two problems for the fractional solution. 

{\bf LPs for fractional weighted vertex cover and b-matching.}

\begin{center}
\begin{tabular}{ | r l | r l | }
\hline
Primal: & & Dual: &\\
 & $\max\sum_{e\in E} x_e$ &  & $\min\sum_{v\in V}w_vy_v$ \\
s.t. & $ x_v:=\sum_{u\in N(v)} x_{uv}\leq w_v,\, \forall v\in V$ & s.t. & $y_u+y_v\geq 1,\, \forall (u,v)\in E$ \\
&  $\bx\geq 0$ & & $\by\geq 0$ \\
\hline
\end{tabular}
\end{center}

\subsection{Rounding fractional vertex cover in bipartite graphs}
\label{sec:rounding}
We present a rounding scheme that converts any given algorithm for online {\em fractional} vertex cover to an algorithm for online {\em integral} vertex cover in bipartite graphs~\cite{NivPersonal}.\footnote{We previously had a more complex rounding scheme. We thank Niv Buchbinder for letting us present his simple scheme.} This allows us to obtain the integral version of our results on fractional vertex cover for bipartite graphs.

Let ${\bf y}$ be the fractional vertex cover maintained by the algorithm. 
Sample $t\in [0,1]$ uniformly at random before the first online vertex arrives. Throughout the execution of the algorithm, assign $u\in L$  to the cover if $y_u\geq t$ and $v\in R$ to the cover if $y_v\geq 1-t$, where $L$ and $R$ are the left and right vertices of the graph $G$ respectively.
As $y_u$ and $y_v$ never decrease in the online algorithm, our rounding procedure guarantees that once a vertex enters the cover, it will always stay there. 

We next claim that this scheme gives a valid cover. Since $\by$ is always feasible, we have $y_u+y_v\geq 1\forall (u,v)\in E$ and hence at least one of $y_u\geq t$ and $y_v\geq 1-t$ must hold. In other words, one of $u$ and $v$ must be in the cover.
Therefore the cover obtained by applying this scheme is indeed valid and monotone, as required.

Finally, for each vertex $v$ with final potential $y_v$, the probability that $v$ is in the cover after the rounding is exactly $y_v$. Therefore, by linearity of expectation, the expected size of the integral vertex cover after the rounding is exactly $\sum_{v\in L\cup R} y_v$. Hence, this rounding scheme does not incur a loss.

\section{Online bipartite vertex cover problem}

In this section, we study the online bipartite vertex cover which is the dual of the traditional online bipartite matching problem. In this problem, the left vertices of the graph $G=(L,R,E)$ are offline and the right vertices in $R$ arrive online one at a time.
As mentioned in the introduction, online bipartite vertex cover generalizes the well known ski rental problem.
\begin{lemma}
Online bipartite vertex cover generalizes ski rental. In particular, no algorithm for online bipartite vertex cover achieves a competitive ratio better than $1+\alpha:=\frac{1}{1-1/e}$, which is the optimal ratio for ski rental~\cite{Karlin1994}.
\end{lemma}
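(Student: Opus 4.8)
The plan is to make precise the reduction sketched in the introduction: ski rental with buy cost $B$ and unknown horizon $N$ is exactly the online bipartite vertex cover problem on the complete bipartite graph $K_{B,N}$ with the $B$ left vertices offline and the $N$ right vertices $r_1,\dots,r_N$ arriving online, one per ``day''. First I would check that the two optima agree. A vertex cover of $K_{B,N}$ either contains all of $L$, at cost $B$, or omits some $\ell_i$, which forces every $r_j$ into the cover, at cost $N$; hence the minimum (integral, and therefore also fractional) vertex cover of $K_{B,N}$ is $\min(B,N)$ --- equivalently the maximum fractional matching of $K_{B,N}$ is $\min(B,N)$ by LP duality --- and the offline optimum of the ski-rental instance is likewise $\min(B,N)$. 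This also shows $K_{B,N}$ has no integrality gap, so it suffices to handle the fractional problem; the bound for the integral version then follows since any integral algorithm is in particular a fractional one.

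Next I would simulate an arbitrary online fractional bipartite vertex cover algorithm $A$ to produce an online ski-rental strategy of no larger expected cost. Let $\by^{(j)}$ be the cover maintained by $A$ right after $r_j$ is processed, and set $a_j:=\tfrac1B\sum_{i=1}^B y^{(j)}_{\ell_i}$, the average left potential; monotonicity of $A$ makes $(a_j)$ nondecreasing in $[0,1]$, and $a_j$ depends only on days $1,\dots,j$. Interpret $a_j$ as the cumulative fraction of equipment bought by day $j$ in a fractional ski-rental strategy that on day $j$ pays $B(a_j-a_{j-1})$ to increase the bought fraction and $1-a_j$ to rent; this is a valid online (and, when $A$ is randomized, randomized) strategy. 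The crucial point is that feasibility of $\by^{(j)}$ on every edge $(\ell_i,r_j)$ gives $y^{(j)}_{r_j}\ge 1-\min_i y^{(j)}_{\ell_i}\ge 1-a_j$, so by monotonicity the final cover of $A$ has size at least $\sum_i y^{(N)}_{\ell_i}+\sum_j y^{(N)}_{r_j}\ge B\,a_N+\sum_{j=1}^N(1-a_j)$, which is at least the cost of the simulated strategy. Taking expectations over $A$'s randomness and the supremum over the adversary's choice of $N$, the worst-case ratio of $A$ on the family $\{K_{B,N}\}_N$ is at least the worst-case ratio of the simulated strategy, which in turn is at least $\rho(B)$, the optimal randomized competitive ratio of ski rental with buy cost $B$. (Conversely, any ski-rental strategy lifts back to a cover algorithm of equal cost --- rent on day $j\mapsto$ add $r_j$, buy $\mapsto$ add all of $L$ --- which is the sense in which online bipartite vertex cover \emph{generalizes} ski rental.)

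Finally, since an online bipartite vertex cover algorithm must perform on every bipartite graph, in particular on $K_{B,N}$ for arbitrarily large $B$, its competitive ratio is at least $\sup_B\rho(B)$, and by Karlin et al.~\cite{Karlin1994} this supremum equals $\frac{1}{1-1/e}=1+\alpha$. I expect the simulation step to be the main obstacle: one must verify that the ``bought fraction'' extracted from $A$ is genuinely online and monotone, and that the per-edge constraint $y_{\ell_i}+y_{r_j}\ge1$ converts cleanly into the rental charge $1-a_j$ --- that is, the argument must be carried out for fractional covers and not merely for the integral special case where the correspondence is immediate. The remaining ingredients, evaluating $OPT$ on $K_{B,N}$ and invoking the known ski-rental lower bound, are routine.
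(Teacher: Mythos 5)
Your proposal is correct and follows essentially the same route the paper intends: the paper gives no detailed proof of this lemma, only the one-line reduction in the introduction via the complete bipartite graph $K_{B,N}$ with the $B$ left vertices offline, and your argument is a careful fleshing-out of exactly that reduction (matching optima $\min(B,N)$, simulating a fractional cover algorithm as a fractional/randomized ski-rental strategy, and invoking the $\frac{1}{1-1/e}$ ski-rental lower bound as $B\to\infty$). The details you supply, in particular the inequality $y_{r_j}\ge 1-\min_i y_{\ell_i}\ge 1-a_j$ and the online-ness of the extracted bought fraction, are sound.
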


\subsection{An optimal algorithm: $GreedyAllocation$}
\label{subsec:greedyallocation}
We present an optimal algorithm for the online vertex cover problem in bipartite graphs.  Notice that the primal-dual analysis of the previously studied water-level algorithms on the online bipartite matching problem implies an optimal algorithm for the online bipartite vertex cover problem. Our algorithm applies the {\em water level} paradigm on {\em vertex cover} instead of {\em matching}. 

This difference may appear trivial but it actually has profound consequences. In the water-filling algorithms for matching, the amount of water used is typically at most 1, i.e. the online vertex can be matched at most once. This is independent of the final water level. However, in vertex cover, we use at most $y+\alpha$ amount of water on the neighbors of the online vertex when the final water level is $y$. Our use of a general allocation function $f(\cdot)$ in the general graph case is partly inspired by this. Secondly, our new algorithm permits a novel charging-base analysis, 
%
%
which encompasses several key observations that are helpful in developing our algorithm for online vertex cover in general graphs.

To avoid repetition, we present our algorithm in the general case as Algorithm~\ref{alg:general greedy} with allocation function $f(\cdot)$. For each vertex $v$, we maintain a non-decreasing cover potential $y_v$ which is initialized to $0$.
When an online vertex $v$ arrives, the edges between $v$ and $N(v)\cap T$ are revealed. In order to cover these new edges, we must increase the potential of $v$ and its neighbors. Suppose that we set $y_v=1-y$ after processing $v$. To maintain a feasible vertex cover, we must increase any $y_u<y$ for $u\in N(v)$ to $y$. We call $y$ the {\em water level}.

The trick here lies in how $y$ is determined. We consider a simple scheme in which $y$ is related to the total potential increment of $N(v)$. More precisely, we require that the total potential increment $\sum_{u\in N(v):y_u<y}(y-y_u)$ be at most $f(y)$, where $f$ is a positive continuous function on $[0,1]$. 

For the online bipartite vertex cover problem considered in this section, the {\em allocation function} $f(y)=\alpha + y$ turns out to be an optimal choice. Another interpretation of this allocation function is that we spend at most $(1-y)+(\alpha + y)=1+\alpha$ amount of water on each online vertex. This observation will be crucial in the analysis.

\begin{algorithm}[h!]
\SetAlgoLined
\caption{$GreedyAllocation$ with allocation function $f(\cdot)$}
\label{alg:general greedy}
\KwIn{Online graph $G=(V,E)$ with offline vertices $U\subset V$}
\KwOut{A fractional vertex cover of $G$}
Initialize for each $u\in U$, $y_u = 0$\;
Let $T$ be the set of known vertices. Initialize $T=U$\;
\For{each online vertex $v$}
{
Maximize $y\le 1$, s.t., $\sum_{u\in N(v)\cap T} \max\{y-y_u,0\} \leq f(y)$\;
For each $u\in N(v)\cap T$, $y_u \leftarrow \max\{ y_u, y\}$\;
$y_v \leftarrow 1-y$\;
$T\leftarrow T\cup \{v\}$\;
}
Output $\{y_v\}$ for all $v\in V$\;
\end{algorithm}

\subsection{Analyzing $GreedyAllocation$}



Now we analyzing the performance of $GreedyAllocation$ with $f(y) = y +\alpha$ for the online bipartite vertex cover problem.
Let $C^*$ be a minimum vertex cover of $G$. Our strategy is to charge the potential increment to vertices of $C^*$ in such a way that each vertex of $C^*$ is charged at most $1+\alpha$.

Let $v$ be the current online vertex. Suppose that our algorithm sets $y_v=1-y$ for some $y$. 
Let $y_u$ be the potential of $u\in N(v)$.
We consider two cases.

\underline{Case 1:} $v\in C^*$. 
It is natural to charge the potential increment in $N(v)$ and $v$ to $v$. 
By our construction, $v$ will be charged at most $1+\alpha$.

\underline{Case 2:} $v\notin C^*$. Notice that we must have $N(v)\subseteq C^*$. In this case, vertices of $N(v)$ should be responsible for the potential $y_v=1-y$ used by $v$.
We describe how to charge $1-y$ to $N(v)$ as follows.

Intuitively, if $\sum_{u\in N(v)} (y-y_u)=f(y)=\alpha + y$, the most fair scheme should charge $\frac{1-y}{f(y)}(y-y_u)$ to $u \in N(v)$ whose potentials increase
since  the fair ``unit charge" is $\frac{1-y}{f(y)}$. 
If $\frac{1-t}{f(t)}$ is decreasing, $\frac{1-y}{f(y)}(y-y_u)$ can be upper bounded by $\int_{y_u}^y \frac{1-t}{f(t)}\mathrm{d}t$. This observation motivates the next lemma which forms the basis of all the major results in this paper.
\begin{lemma}
\label{lem:charging}
Let $f:[0,1]\longrightarrow\mathbb{R}_+$ be continuous such that $\frac{1-t}{f(t)}$ is decreasing, and $F(x)=\int_0^x  \frac{1-t}{f(t)}\mathrm{d}t$. If $\sum_{u\in X} (y-y_u)= f(y)$ for some set $X$ and $y\geq y_u$ for $u\in X$, then
$$1-y\leq \sum_{u\in X} \left(F(y)-F(y_u)\right).$$
\end{lemma}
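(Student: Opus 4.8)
The plan is to exploit the two hypotheses — that $\sum_{u\in X}(y-y_u)=f(y)$ and that $g(t):=\frac{1-t}{f(t)}$ is non-increasing — via the "fair unit charge" heuristic sketched just before the lemma. The target inequality rearranges to
\[
1-y \;=\; \frac{1-y}{f(y)}\cdot f(y)\;=\;g(y)\sum_{u\in X}(y-y_u)\;=\;\sum_{u\in X} g(y)\,(y-y_u),
\]
so it suffices to show, term by term, that $g(y)\,(y-y_u)\le F(y)-F(y_u)=\int_{y_u}^{y} g(t)\,\mathrm{d}t$ for each $u\in X$. This is exactly the statement that the area of the rectangle of height $g(y)$ over the interval $[y_u,y]$ is at most the area under $g$ over the same interval, which is immediate because $g$ is non-increasing, so $g(t)\ge g(y)$ for all $t\in[y_u,y]$. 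Summing over $u\in X$ gives $1-y\le\sum_{u\in X}\bigl(F(y)-F(y_u)\bigr)$, as desired.

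The key steps in order: (i) invoke the constraint $\sum_{u\in X}(y-y_u)=f(y)$ to write $1-y=\sum_{u\in X}\frac{1-y}{f(y)}(y-y_u)$; (ii) observe $\frac{1-y}{f(y)}=g(y)$ and, since $g$ is non-increasing and $y_u\le y$, bound $g(y)(y-y_u)\le\int_{y_u}^{y}g(t)\,\mathrm{d}t$ by monotonicity of the integral; (iii) recognize $\int_{y_u}^{y}g(t)\,\mathrm{d}t=F(y)-F(y_u)$ by definition of $F$; (iv) sum over $u\in X$. One should also note $F$ is well-defined (the integrand $\frac{1-t}{f(t)}$ is continuous on $[0,1]$ since $f$ is continuous and positive) and that the degenerate case $y_u=y$ contributes $0$ on both sides, so the bound is trivially consistent there.

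There is essentially no hard obstacle here; the lemma is a clean one-line application of the monotonicity of $g$. The only point requiring a sliver of care is making sure the hypothesis is used in the right direction — we need $g$ \emph{decreasing} (not increasing) precisely so that the constant $g(y)$ under-estimates $g$ on $[y_u,y]$; had $g$ been increasing the inequality would flip. A secondary subtlety, worth a sentence, is that the lemma assumes the constraint holds with \emph{equality} ($\sum_{u\in X}(y-y_u)=f(y)$); in the algorithm this corresponds to the case where the water level $y$ is not capped at $1$, which is exactly the regime (Case 2, $v\notin C^*$, $y<1$) in which the lemma is applied, so no generalization to the inequality version is needed.
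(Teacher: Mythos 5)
Your proposal is correct and is essentially the paper's own argument: both use the constraint $\sum_{u\in X}(y-y_u)=f(y)$ to write $1-y$ as $\frac{1-y}{f(y)}\sum_{u\in X}(y-y_u)$, and then bound each $(y-y_u)\frac{1-y}{f(y)}$ by $\int_{y_u}^{y}\frac{1-t}{f(t)}\mathrm{d}t=F(y)-F(y_u)$ using that $\frac{1-t}{f(t)}$ is decreasing. The only difference is presentational (you argue term by term, the paper writes the summed chain of inequalities), so there is nothing to add.
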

\begin{proof}
We have the following
\begin{eqnarray*}
\sum_{u\in X}\left(F(y)-F(y_u)\right) & = & \sum_{u\in X}\int_{y_u}^y  \frac{1-t}{f(t)}\mathrm{d}t\\
             & \geq &  \sum_{u\in X}(y-y_u)\frac{1-y}{f(y)}=1-y,
\end{eqnarray*}
where the inequality above holds as $\frac{1-t}{f(t)}$ is decreasing.
\end{proof}

%

We are ready to evaluate the performance of $GreedyAllocation$. 
\begin{theorem}
\label{thm:no alternation}
$GreedyAllocation$ is $1+\alpha$-competitive and hence optimal for the online bipartite vertex cover problem.
\end{theorem}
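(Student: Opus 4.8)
The plan is to bound the final cover value $\sum_{v\in V} y_v$ against $(1+\alpha)|C^*|$ by exhibiting a charging scheme in which every vertex of the minimum cover $C^*$ receives total charge at most $1+\alpha$, while every unit of potential mass created by the algorithm is charged to some vertex of $C^*$. The key structural fact specific to the bipartite one-sided case is that the offline vertices $L$ never need to "pay for themselves" — the potential $1-y$ assigned to an online vertex $v$ is only nonzero when $y<1$, and whenever $v\notin C^*$ we have $N(v)\subseteq C^*$ since $C^*$ is a cover — so the two cases in the excerpt are exhaustive and the charges land only on $C^*$.

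First I would fix the minimum cover $C^*$ and process the online vertices in arrival order, maintaining the invariant that after each step, the total potential created so far equals the total charge distributed so far, and each vertex $w\in C^*$ has accumulated charge at most $(1+\alpha)\cdot[\text{number of arrivals that charged }w]$, bounded by what we want at the end. For an arriving online vertex $v$ with chosen water level $y$: in Case 1 ($v\in C^*$) I charge the entire increment $\sum_{u\in N(v)}\max\{y-y_u,0\} + (1-y)$ to $v$ itself; by the algorithm's constraint the neighbor part is at most $f(y)=\alpha+y$, so the total is at most $(1-y)+(\alpha+y)=1+\alpha$, and $v$ is charged only once (at its own arrival), so its lifetime charge is exactly this quantity $\le 1+\alpha$. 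In Case 2 ($v\notin C^*$, so $N(v)\subseteq C^*$), the neighbor increments already land on vertices of $C^*$ and I only need to redistribute the self-potential $1-y$; here I invoke Lemma~\ref{lem:charging} with $X=\{u\in N(v): y_u<y\}$ — after noting that by maximality of $y$ (when $y<1$) the constraint is tight, $\sum_{u\in X}(y-y_u)=f(y)$ — to charge $u\in X$ an amount $F(y)-F(y_u)$, which sums to at least $1-y$, covering the self-potential, and I absorb this into a telescoping bookkeeping where each $u\in C^*$ across all its incident online arrivals accumulates at most $F(1)$ worth of such "self-potential" charges plus its own direct neighbor-increment charges.

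The real bookkeeping step is to show the \emph{total} charge landing on a single offline vertex $u\in L\cap C^*$ over the whole execution is at most $1+\alpha$. Its potential $y_u$ rises monotonically to some final value $y_u^{\mathrm{final}}\le 1$ through a sequence of online neighbors; the direct increments it receives telescope to $y_u^{\mathrm{final}}$, and the Case-2 redistribution charges it receives across those same arrivals telescope to at most $F(y_u^{\mathrm{final}}) = \int_0^{y_u^{\mathrm{final}}} \frac{1-t}{f(t)}\,dt$. So its total charge is at most $y_u^{\mathrm{final}} + F(y_u^{\mathrm{final}})$, and I need this to be $\le 1+\alpha$ for all values in $[0,1]$. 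With $f(t)=t+\alpha$ one computes $F(x)=\int_0^x\frac{1-t}{t+\alpha}\,dt = (1+\alpha)\ln\frac{x+\alpha}{\alpha} - x$, so $x + F(x) = (1+\alpha)\ln\frac{x+\alpha}{\alpha}$, which is increasing in $x$ and at $x=1$ equals $(1+\alpha)\ln\frac{1+\alpha}{\alpha}$; since $\alpha$ is defined by $\frac{1}{1-1/e}=1+\alpha$, i.e. $\alpha = \frac{1}{e-1}$, we get $\frac{1+\alpha}{\alpha} = e$ and the bound is exactly $1+\alpha$. This is the main obstacle: pinning down that the constant $\alpha$ is precisely the one making $(1+\alpha)\ln\frac{1+\alpha}{\alpha}=1+\alpha$, and verifying $\frac{1-t}{f(t)}$ is indeed decreasing (immediate here since numerator decreases and denominator increases) so Lemma~\ref{lem:charging} applies.

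Finally I would assemble the pieces: summing over all $w\in C^*$ gives $\sum_{v\in V} y_v = \text{(total charge)} \le (1+\alpha)|C^*| = (1+\alpha)\,OPT$, establishing $(1+\alpha)$-competitiveness; optimality then follows immediately from the earlier Lemma stating no algorithm beats $1+\alpha = \frac{1}{1-1/e}$ for this problem (via the reduction from ski rental). I should double-check the edge case $y=1$ (when $v$ contributes zero self-potential and Case 2 needs no redistribution at all, so the argument is vacuous there) and confirm that the charging in Case 1 does not interfere with neighbor increments that those same neighbors might later be charged for — but since a vertex in $C^*$ charged in Case 1 is the online vertex $v$ itself and receives nothing further, while its offline neighbors' increments in that step are also counted, I must be slightly careful to not double-count: the cleanest framing is that every unit of created potential is charged exactly once, to $v$ in Case 1 and split between $N(v)$ (for neighbor increments) and $X\subseteq N(v)$ (for the self-potential, via Lemma~\ref{lem:charging}) in Case 2.
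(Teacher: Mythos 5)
Your proposal is correct and follows essentially the same charging argument as the paper: the same two cases (charge $1+\alpha$ to an online vertex of $C^*$; otherwise charge neighbor increments to $N(v)\subseteq C^*$ directly and redistribute the self-potential $1-y$ via Lemma~\ref{lem:charging} using the tightness $\sum_{u\in X}(y-y_u)=\alpha+y$ when $y<1$), with the same telescoping bound $1+F(1)-F(0)=(1+\alpha)\ln\frac{1+\alpha}{\alpha}=1+\alpha$ for offline vertices. Your explicit computation of $F$ and the check that $\frac{1-t}{f(t)}$ decreases are just fuller renderings of steps the paper leaves implicit.
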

\begin{proof}
We charge the potentials used to the vertices of the minimum cover $C^*$. Let $v$ be an online vertex. The case $v\in C^*$ is trivial as explained before.



Now consider the case $v\notin C^*$. We charge the potential spent on $u\in N(v)\subseteq C^*$ to $u$ itself. The potential spent on $v$ is $y_v = 1-y$ where $y$ is the final water level after processing $v$. 
Let $X\subset N(v)$ be the set of vertices whose potentials increase when processing $v$.
If $y =1$, we are done as no charging is necessary. If $y <1$, then we have $\sum_{u\in X} (y-y_u) = \alpha +y$, where $y_u$ is the potential of $u$ before processing $v$. We charge each vertex $u\in X$ by $G(y)-G(y_u)$. By Lemma~\ref{lem:charging}, $1-y \leq \sum_{u\in X} (G(y)-G(y_u))$, i.e., our charging is sufficient.



In summary, each online vertex of $C^*$ is responsible for $1+\alpha$ potential. On the other hand, each left vertex of $C^*$ is responsible for itself (which contributes at most 1 to $C$) as well as the incoming charges from its neighbors. For $u\in L\cap C^*$, the sum of these charges can be at most $G(1)-G(0)$ as the sum $G(y)-G(y_u)$, taken over the iterations in which $y_u$ increases, telescopes. Therefore the amount of potential charged to a left vertex is also bounded by $1+G(1)-G(0) = (\alpha+1)\ln (1+\frac{1}{\alpha}) = 1+\alpha$ since $\alpha = \frac{1}{e-1}$.
This gives our desired result.
\end{proof}

In fact, $GreedyAllocation$ can be extended to the vertex-weighted setting. To avoid diversion from the main results, we defer the proof of the following theorem to the appendix.
\begin{theorem}
$GreedyAllocation$ (modified) is $1+\alpha$-competitive and hence optimal for online vertex-weighted bipartite vertex cover.
\end{theorem}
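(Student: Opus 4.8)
The plan is to replay the proof of Theorem~\ref{thm:no alternation} essentially verbatim, carrying the vertex weights through every step. First I would state the modification to $GreedyAllocation$: when an online vertex $v$ arrives with weight $w_v$, pick the largest $y\le 1$ satisfying $\sum_{u\in N(v)\cap T} w_u\max\{y-y_u,0\}\le w_v f(y)$ for the same allocation function $f(y)=\alpha+y$, raise each $y_u$ with $u\in N(v)\cap T$ to $\max\{y_u,y\}$, and set $y_v=1-y$. Both sides of the constraint are continuous and nondecreasing in $y$, and at $y=0$ the left side is $0$ while the right side is $w_v\alpha\ge 0$, so the maximum is attained; moreover, by continuity at the maximizer, if the chosen $y$ is strictly below $1$ then the constraint is tight, so the total \emph{weighted} potential spent processing $v$ equals exactly $w_v(1-y)+w_v f(y)=w_v(1+\alpha)$. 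The corner cases are harmless: if $w_v=0$ then $v$ contributes nothing to the objective and the rule never forces a positively weighted neighbor's potential to rise; if some $w_u=0$ that neighbor is simply free to move.

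Next I would record the weighted version of Lemma~\ref{lem:charging}: since $\frac{1-t}{\alpha+t}$ is decreasing on $[0,1]$ (its derivative is $-\frac{1+\alpha}{(\alpha+t)^2}<0$) and $F(x)=\int_0^x\frac{1-t}{f(t)}\,dt$, whenever $\sum_{u\in X}w_u(y-y_u)=w_v f(y)$ with $y\ge y_u$ for all $u\in X$ we have $w_v(1-y)\le\sum_{u\in X}w_u\left(F(y)-F(y_u)\right)$; the proof is that of Lemma~\ref{lem:charging} with the weights pulled inside the integrals, using $\int_{y_u}^y\frac{1-t}{f(t)}\,dt\ge (y-y_u)\frac{1-y}{f(y)}$ and then $\sum_{u\in X}w_u(y-y_u)\frac{1-y}{f(y)}=w_v f(y)\frac{1-y}{f(y)}=w_v(1-y)$.

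Then I would rerun the charging scheme against a minimum-weight vertex cover $C^*$. If the online vertex $v\in C^*$, charge all the weighted potential spent processing $v$, which is at most $w_v(1+\alpha)$, to $v$. If $v\notin C^*$ then $N(v)\subseteq C^*$; charge the weighted potential spent on each neighbor $u$ to $u$ itself, and charge the $w_v(1-y)$ spent on $v$ to the set $X$ of neighbors whose potentials rose, giving $u\in X$ the amount $w_u\left(F(y)-F(y_u)\right)$, which suffices by the weighted lemma (nothing needs to be charged when $y=1$). Summing up, each online vertex of $C^*$ absorbs at most $(1+\alpha)w_v$, while each offline vertex $u\in C^*$ absorbs its own final weighted potential $w_u y_u^{\mathrm{fin}}$ plus the incoming charges, which telescope to $w_u\left(F(y_u^{\mathrm{fin}})-F(0)\right)=w_u F(y_u^{\mathrm{fin}})$, for a total of $w_u\left(y_u^{\mathrm{fin}}+F(y_u^{\mathrm{fin}})\right)$. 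Since $y+F(y)$ is nondecreasing on $[0,1]$ and equals $1+F(1)=1+\alpha$ at $y=1$ — here $F(1)=(1+\alpha)\ln\frac{1+\alpha}{\alpha}-1=(1+\alpha)\ln e-1=\alpha$ because $\alpha=\frac{1}{e-1}$ — this total is at most $(1+\alpha)w_u$. As the charges account for all of $\sum_v w_v y_v=ALG$, we conclude $ALG\le(1+\alpha)\sum_{v\in C^*}w_v=(1+\alpha)\,OPT$, and optimality follows from the ski-rental lower bound, of which the unweighted case is already an instance.

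The step I expect to require the most care is the bookkeeping rather than any single inequality: verifying that the weighted water-level rule is well-posed, that $w_v$ factors cleanly out of both the defining constraint and every charge so the ``$1+\alpha$ per vertex'' budget is preserved \emph{exactly}, and that the degenerate weights ($w_v=0$, $w_u=0$) and already-saturated neighbors ($y_u=1$) do not disturb the accounting. Once that is pinned down, the remainder is a faithful weighted transcription of the proof of Theorem~\ref{thm:no alternation}.
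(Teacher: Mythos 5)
Your proposal is correct and follows essentially the same route as the paper: the same weighted modification of $GreedyAllocation$ with $f(y)=\alpha+y$, the same weighted analogue of Lemma~\ref{lem:charging}, and the same charging scheme against $C^*$ with the telescoping bound $1+F(1)-F(0)=1+\alpha$ per unit weight (your extra care about well-posedness, tightness of the constraint when $y<1$, and zero weights only adds detail the paper leaves implicit). The only cosmetic slip is that the incoming charges to an offline vertex telescope to \emph{at most} $w_u\bigl(F(y_u^{\mathrm{fin}})-F(0)\bigr)$ rather than exactly, since $y_u$ may also rise during iterations whose cost is charged to the arriving vertex, but the inequality is in the right direction and the bound stands.
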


%

\section{Online fractional vertex cover in general graphs}
The lessons learned in the last section are actually much more general. As suggested in the description of $GreedyAllocation$, we can generalize the algorithm to general graphs. However, we have to carefully design the allocation function $f(\cdot)$ to get a non-trivial competitive ratio.

Before getting into the details, we revisit the analysis in the last section to gain some insights which will be helpful to tackle the general graph version of the problem. In our charging argument, each vertex in $L\cap C^*$ is responsible for the charges from its neighbors. On the other hand, a vertex in $R\cap C^*$ is only responsible for the potential increment when processing itself. However, if both vertices in $L$ and $R$ are online, an online vertex $v\in C^*$ should be responsible for the potential used to process it when it arrives as well as the charges from future neighbors.

Let $f(x)$ be a general allocation function such that $\frac{1-t}{f(t)}$ is decreasing. Informally, if the water level when processing $v$ is $y<1$, i.e. the initial potential of $v$ is $1-y$, we use potential $f(y)$ on $v$'s neighbors and $1-y$ on $v$ itself. Afterwards, $v$ will take charges from its future neighbors. Notice that $v$'s potential will grow from $1-y$ to at most $1$. By Lemma~\ref{lem:charging}, $v$ will take charges at most $\int_{1-x}^1 \frac{1-t}{f(t)} \mathrm{d}t$. Putting the two pieces together, the total charges to each $v\in C^*$ and hence the competitive ratio are at most 
\[
\beta(f) = \max_{z\in [0,1]} 1 + f(1-z) + \int_{z}^1 \frac{1-t}{f(t)}\mathrm{d}t.
\]

We will show how to compute the optimal allocation function $f(\cdot)$ in Sec.~\ref{sec:optimization}. From now on, we will formally show that the performance of $GreedyAllocation$ in general graphs with allocation function $f(\cdot)$ is at most $\beta(f)$.

\begin{lemma}
Let $f(\cdot)$ be the allocation function.
In processing vertex $v$ in $GreedyAllocation$, we must have either $y=1$ or $\sum_{u\in N(v)} \max\{y-y_u,0\}= f(y)$.
\end{lemma}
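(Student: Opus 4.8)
The statement to prove is essentially a structural observation about the algorithm $GreedyAllocation$: in each iteration processing vertex $v$, the maximization "Maximize $y \le 1$ s.t. $\sum_{u \in N(v) \cap T} \max\{y - y_u, 0\} \le f(y)$" either hits the boundary $y = 1$ or achieves equality in the constraint.

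\medskip

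\textbf{Proof proposal.} The plan is to analyze the function $g(y) := \sum_{u \in N(v) \cap T} \max\{y - y_u, 0\} - f(y)$ on the interval $[0,1]$ and argue that the supremum defining $y$ in the algorithm is attained, and that at the attained point we have the claimed dichotomy. First I would observe that each term $\max\{y - y_u, 0\}$ is continuous and nondecreasing in $y$, so their sum $S(y) := \sum_{u \in N(v) \cap T}\max\{y-y_u,0\}$ is continuous and nondecreasing; since $f$ is continuous (positive continuous on $[0,1]$ by hypothesis), $g$ is continuous on the compact set $[0,1]$. At $y = 0$ we have $S(0) = 0$ and $f(0) > 0$, so $g(0) < 0$, i.e.\ $y=0$ is feasible; hence the feasible set $\{y \in [0,1] : g(y) \le 0\}$ is nonempty and closed, so its supremum $y^\star$ is attained and is itself feasible, meaning $S(y^\star) \le f(y^\star)$.

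\medskip

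The core of the argument is then a case split on whether $y^\star = 1$. If $y^\star = 1$ we are in the first case and there is nothing more to prove. If $y^\star < 1$, I claim $S(y^\star) = f(y^\star)$. Suppose for contradiction $S(y^\star) < f(y^\star)$, i.e.\ $g(y^\star) < 0$. By continuity of $g$ there is $\varepsilon > 0$ with $y^\star + \varepsilon \le 1$ and $g(y^\star + \varepsilon) < 0$, so $y^\star + \varepsilon$ is feasible and strictly larger than $y^\star$, contradicting maximality. Hence $S(y^\star) = f(y^\star)$, which is exactly the second alternative in the statement (noting $S(y^\star) = \sum_{u \in N(v)}\max\{y - y_u, 0\}$ since only neighbors in $T$ have defined potentials, and vertices not yet arrived contribute nothing).

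\medskip

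I do not anticipate a serious obstacle here; the only subtlety worth a sentence is making sure the "Maximize" in the algorithm is well-defined, i.e.\ that the feasible set is nonempty (handled by the $y = 0$ observation using positivity of $f$) and that the maximum is attained rather than merely a supremum (handled by continuity and compactness). One should also remark that the statement quantifies over $u \in N(v)$ rather than $u \in N(v) \cap T$, but this is harmless: at the time $v$ is processed, edges of $v$ to not-yet-arrived vertices are not part of the current graph and those vertices have no potential, so effectively $\max\{y - y_u, 0\}$ for such $u$ should be read as $0$, or equivalently one restricts to $N(v) \cap T$ as the algorithm does. The whole proof is three or four lines once the continuity/compactness framing is in place.
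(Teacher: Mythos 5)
Your proposal is correct and follows essentially the same continuity-plus-maximality argument as the paper, which defines $H(t)=\sum_{u\in N(v)}\max\{t-y_u,0\}-f(t)$, notes $H(0)=-f(0)<0$, and derives a contradiction with the maximality of $y$ when $y<1$ and $H(y)<0$ (the paper via the intermediate value theorem on $(y,1)$, you via a local perturbation $y^\star+\varepsilon$ — a negligible difference). Your extra remarks on attainment of the maximum and on reading $N(v)$ as $N(v)\cap T$ are fine but not needed beyond what the paper already assumes.
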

\begin{proof}
Let $H(t)=\sum_{u\in N(v)}\max\{t-y_{u},0\}- f(t)$. Note that $H$ is continuous and $H(0)=-f(0)<0$.

Assume $y< 1$. Notice that $H(1)>0$. Otherwise, we can set $y=1$. If $H(y) <0$, then by intermediate value theorem there is some $t\in (y,1)$ for which $H(t)=0$. This contradicts the maximality of $y$. Hence $H(y)=0$, as desired.
\end{proof}

Our previous discussion implies that $GreedyAllocation$ is competitive against the minimum {\em integral} vertex cover. In fact, our algorithm is also competitive against the minimum {\em fractional} vertex cover in general graphs.
\begin{theorem}
Let $f: [0,1]\longrightarrow R^+$ be the continuous allocation function such that $\frac{1-t}{f(t)}$ is decreasing.
Let $\beta=\max_{z\in [0,1]} 1+f(1-z)+\int_z^1 \frac{1-t}{f(t)}\mathrm{d}t$ and $F(x)=\int_0^x  \frac{1-t}{f(t)}\mathrm{d}t$. 
$GreedyAllocation(f)$ is $\beta$-competitive against the optimal fractional vertex cover in general graphs.
\end{theorem}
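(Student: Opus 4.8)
The plan is to run a charging argument against a fixed optimal \emph{fractional} cover $\by^\ast$, extending the argument behind Theorem~\ref{thm:no alternation}. Write $p_v$ for the potential that $GreedyAllocation(f)$ maintains at $v$ (to avoid clashing with the water level $y$). By the lemma just proved, when an online vertex $v$ is processed with water level $y$, the total potential created in that step is $(1-y)+f(y)$ when $y<1$ and at most $f(1)$ when $y=1$ (in which case $1-y=0$); in particular $ALG=\sum_v p_v^{\mathrm{final}}$ is exactly the sum of these per-step amounts. So it suffices to distribute all created potential among the vertices so that $v$ receives a charge $c_v\le\beta\,y_v^\ast$, for then $ALG=\sum_v c_v\le\beta\sum_v y_v^\ast=\beta\cdot OPT$.

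First I would set up the charging. When $v$ is processed with water level $y=1-z$, let $X=\{u\in N(v)\cap T:p_u<y\}$; by the preceding lemma $\sum_{u\in X}(y-p_u)=f(y)$ when $y<1$. I split each piece of new potential in proportion to $\by^\ast$: of the amount $1-y$ put on $v$, charge $y_v^\ast(1-y)$ back to $v$ and spread the remaining $(1-y_v^\ast)(1-y)$ over $X$ proportionally to the numbers $F(y)-F(p_u)$ (this is legitimate because Lemma~\ref{lem:charging} gives $\sum_{u\in X}(F(y)-F(p_u))\ge 1-y$, so the part of it landing on $u$ is at most $(1-y_v^\ast)(F(y)-F(p_u))$); of the amount $y-p_u$ put on each $u\in X$, charge $y_v^\ast(y-p_u)$ to $v$ and $(1-y_v^\ast)(y-p_u)$ to $u$. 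Every created unit is assigned, so $\sum_v c_v=ALG$.

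Then I would bound $c_v$ by collecting three contributions. (i) $v$'s own potential charged back to $v$: it gets $y_v^\ast(1-y_v)$ when processed, and each time a later neighbour $w$ raises $p_v$ it gets $(1-y_w^\ast)$ times that rise; since $(v,w)\in E$ forces $1-y_w^\ast\le y_v^\ast$, these sum to at most $y_v^\ast p_v^{\mathrm{final}}\le y_v^\ast$. (ii) Charges to $v$ for the rises of its \emph{earlier} neighbours at the moment $v$ is processed: this equals $\sum_{u\in X}y_v^\ast(y-p_u)=y_v^\ast f(1-z)$ with $z=1-y_v$. (iii) ``Incoming'' charges coming from the initial potential of later neighbours: when $w$ raises $p_v$ from $q$ to $y_w$, $v$ is charged at most $(1-y_w^\ast)(F(y_w)-F(q))\le y_v^\ast(F(y_w)-F(q))$ by the same edge inequality, and these $F$-differences telescope along the increasing sequence of values of $p_v$ (even though some rises of $p_v$ are charged entirely to other vertices, this is fine since $F$ is nondecreasing), giving total at most $y_v^\ast(F(p_v^{\mathrm{final}})-F(1-y_v))\le y_v^\ast\int_z^1\frac{1-t}{f(t)}\,\mathrm{d}t$. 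Adding the three pieces yields $c_v\le y_v^\ast\bigl(1+f(1-z)+\int_z^1\frac{1-t}{f(t)}\,\mathrm{d}t\bigr)\le\beta\,y_v^\ast$ since $z=1-y_v\in[0,1]$; an offline vertex has no contribution (ii) and $p_v$ starts at $0$, so the bound holds with $z=0$. Summing over $v$ completes the argument.

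The step I expect to be the crux is the fractional split: one must check that replacing the integral dichotomy ``$v\in C^\ast$ / $N(v)\subseteq C^\ast$'' with the weights $y_v^\ast$ and $1-y_v^\ast$ still lets each neighbour charge absorb its share while making the per-vertex total scale with $y_v^\ast$ --- and this is precisely the place the LP constraint $y_u^\ast+y_v^\ast\ge 1$ on every edge is used, to turn a factor $1-y_v^\ast$ into the factor $y_u^\ast$ that the bound on $c_u$ needs. The bookkeeping in (iii), telescoping through rises of $p_v$ that are charged elsewhere, is the other point requiring care, and is handled by monotonicity of $F$.
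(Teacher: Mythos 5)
Your proposal is correct and takes essentially the same route as the paper: a charging argument against a fixed optimal fractional cover built on Lemma~\ref{lem:charging}, the edge constraints $y_u^*+y_v^*\ge 1$, and telescoping of the $F$-differences along each vertex's increasing potential, yielding the per-vertex bound $\bigl(1+f(1-z)+\int_z^1\frac{1-t}{f(t)}\mathrm{d}t\bigr)y_v^*\le\beta y_v^*$. The only difference is cosmetic bookkeeping: you distribute each step's increment exactly in proportions $y_v^*$ and $1-y_v^*$ and then upper-bound what each vertex receives, whereas the paper directly posits the charges $\left(f(y)+1-y\right)y_v^*$ to $v$ and $\left(y-y_u+F(y)-F(y_u)\right)y_u^*$ to each $u\in X$ and verifies they cover the increment; after applying the edge inequality your per-vertex totals coincide with theirs.
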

\begin{proof}
Let $ y^*$ be the minimum fractional vertex cover. Denote by $v$ the current online vertex. Consider the following charging scheme.
\begin{itemize}
\item Charge $\left(f(y)+1-y\right)y_v^*$ to $v$.
\item Charge $\left(y-y_u+F(y)-F(y_u)\right)y_u^*$ to $u\in X$, where $X=\{ u\in N(v)\mid y_u<y\}$.
\end{itemize}
We claim that the total charges are sufficient to cover the potential increment $1-y+\sum_{u\in X}(y-y_u)$.

Observe that since $y_v^*+y_u^*\geq 1$ for all $u\in N(v)$. Since $f(y)\geq \sum_{u\in X}(y-y_u)$, we have 
\begin{eqnarray*}
f(y)y_v^*+\sum_{u\in X}(y-y_u)y_u^* & \geq & \sum_{u\in X}(y-y_u)(y_v^*+y_u^*)\\
             & \geq &  \sum_{u\in X}(y-y_u).
\end{eqnarray*}
Furthermore,
\begin{eqnarray*}
&&(1-y)y_v^*+\sum_{u\in X}\left(F(y)-F(y_u)\right)y_u^* \\
&\geq&  (1-y)y_v^*+\sum_{u\in X}\left(F(y)-F(y_u)\right)(1-y_v^*)\\
&\geq&   1-y,
\end{eqnarray*}
where the last inequality follows from Lemma~\ref{lem:charging}.

The above shows that the proposed charging scheme indeed accounts for the total potential increment. Now we bound the total charges to a vertex $v$ over the execution of the algorithm.

When $v$ arrives, $y_v$ is initialized as $1-y$ and $v$ is charged $\left(f(y)+1-y\right)y_v^*$. After that, when $y_v$ increases from $a$ to $b$, $v$ is charged $\left(a-b+F(a)-F(b)\right)y_v^*$ . Note that the sum of these terms telescopes and is at most $$\left(1-(1-y)+F(1)-F(1-y)\right)y_v^*=\left(y+F(1)-F(1-y)\right)y_v^*.$$

Therefore the total charges to $v$ are at most
\begin{eqnarray*}
&&\left(f(y)+1-y\right)y_v^*+\left(y+F(1)-F(1-y)\right)y_v^*\\
&=& \left(1+f(1-y)+\int_y^1 \frac{1-t}{f(t)}\mathrm{d}t\right)y_v^*\\
&\leq& \beta y_v^*.
\end{eqnarray*}
This implies that the total potential is bounded by $\beta\sum_{v\in V} y_v^*$, which shows that our algorithm is $\beta$-competitive.
\end{proof}

\subsection{Computing the optimal allocation function}
\label{sec:optimization}

The next question is then to find a good $f(y)$ to get a small $\beta$. In essence, the goal is to solve the following optimization problem 
\begin{equation}
\label{eqn:opt}
\inf_{f\in \mathcal{F}}\max_{z\in [0,1]} 1+f(1-z)+\int_z^1 \frac{1-t}{f(t)}\mathrm{d}t.
\end{equation}

where $\mathcal{F}$ is the class of positive continuous functions on $[0,1]$ such that $\frac{1-t}{f(t)}$ is decreasing for each $f\in \mathcal{F}$.

To the best of our knowledge, there is no systematic approach to tackle a minimax optimization problem of this form. A natural way is to first express the optimal $z$ in terms of $f$, and then use techniques from calculus of variation to compute the best $f$. However, a major difficulty is that there is no closed form expression for the optimal $z$. 

To overcome this hurdle, we first disregard the requirement that $\frac{1-t}{f(t)}$ be decreasing. (Though, our final optimal solution turns out to satisfy this condition.) We show that such a relaxation of the optimization problem admits a very nice optimality condition, namely that there exists some optimal $f$ such that $1+f(1-z)+\int_z^1 \frac{1-t}{f(t)}\mathrm{d}t$ is constant for all $z$. We characterize this property in the following lemma.
\begin{lemma}
Let $r:[0,1]\longrightarrow\mathbb{R}_{+}$ be a
continuous function such that for $\forall p\in [0,1]$, $r(p)+\int_{1-p}^{1}\frac{1-x}{r(x)}\mathrm{d}x\leq\gamma$
for some $\gamma>0$. Then there exists a continuous function $f:[0,1]\longrightarrow\mathbb{R}_{+}$
such that $\forall p\in [0,1]$, $f(p)+\int_{1-p}^{1}\frac{1-x}{f(x)}\mathrm{d}x \equiv\gamma$.
\end{lemma}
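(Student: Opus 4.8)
The plan is to produce $f$ as the limit of a monotone iteration of an integral operator, using $r$ itself as a starting subsolution. Put $\mu := \min_{x\in[0,1]} r(x)$, which is attained and strictly positive since $r$ is continuous and positive on a compact interval; since $\int_{1-p}^{1}\frac{1-x}{r(x)}\,\mathrm{d}x\ge 0$, the hypothesis also yields $r(p)\le\gamma$ for all $p$. For a measurable $g:[0,1]\to[\mu,\gamma]$ define
\[
(Tg)(p)\;:=\;\gamma-\int_{1-p}^{1}\frac{1-x}{g(x)}\,\mathrm{d}x,\qquad p\in[0,1].
\]
Two properties are immediate: (i) $T$ is monotone, since $g\mapsto 1/g$ is order-reversing; and (ii) whenever $g\ge\mu$ the integrand lies in $[0,1/\mu]$, so $Tg$ is $(1/\mu)$-Lipschitz (hence continuous) and $Tg\le\gamma$. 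The role of the hypothesis on $r$ is precisely that it rewrites as $Tr\ge r$.

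First I would check that $\mathcal{S}:=\{g\text{ measurable}: r\le g\le\gamma\text{ pointwise}\}$ is invariant under $T$: if $g\in\mathcal{S}$ then $Tg\ge Tr\ge r$ by monotonicity and the hypothesis, while $Tg\le\gamma$ by (ii). Starting from $r\in\mathcal{S}$, the relation $r\le Tr$ together with monotonicity produces an increasing chain $r\le Tr\le T^2r\le\cdots\le\gamma$ with all terms in $\mathcal{S}$, hence all bounded in $[\mu,\gamma]$. Let $f$ be the pointwise limit; then $\mu\le f\le\gamma$ and $f$ is Borel measurable. For each fixed $p$, passing to the limit in $(T^{n+1}r)(p)=\gamma-\int_{1-p}^{1}\frac{1-x}{(T^{n}r)(x)}\,\mathrm{d}x$ via dominated convergence (the integrands $\frac{1-x}{(T^{n}r)(x)}$ converge pointwise to $\frac{1-x}{f(x)}$ and are dominated by the constant $1/\mu$) gives $f=Tf$, i.e.
\[
f(p)+\int_{1-p}^{1}\frac{1-x}{f(x)}\,\mathrm{d}x\;=\;\gamma\qquad\text{for all }p\in[0,1],
\]
which is the claimed identity.

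The only subtle point, which I expect to be the main obstacle, is continuity of $f$: a pointwise (even monotone) limit of continuous functions need not be continuous, so this does not follow merely from $f=\lim T^{n}r$. However, it is immediate from the fixed-point equation itself: $f=Tf$, and by property (ii) — applicable because $f\ge\mu>0$ — the function $Tf$ is $(1/\mu)$-Lipschitz, hence continuous. (As a cross-check, $f$ is lower semicontinuous, being an increasing limit of continuous functions, and it agrees with the continuous function $Tf$.) Positivity of $f$ needs no separate argument since $f\ge\mu>0$, and note that the monotonicity of $\frac{1-t}{f(t)}$ plays no role here — it is handled separately in the excerpt.
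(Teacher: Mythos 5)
Your proposal is correct and is essentially the paper's own argument: the paper's recursion $r_{i+1}=r_i+\gamma-R_i$ is exactly your iteration $r_{i+1}=Tr_i$, and both proofs use monotone pointwise convergence of the iterates, dominated convergence to pass to the limit in the integral, and the fixed-point equation $f=Tf$ to recover continuity. The only cosmetic differences are your explicit framing via a monotone operator on the order interval $[r,\gamma]$ and your constant dominating bound $1/\mu$ in place of the paper's dominating function $\frac{1-x}{r_1(x)}$.
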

\begin{proof}
Let $r_{1}=r$ and $R_{1}(p)=r_{1}(p)+\int_{1-p}^{1}\frac{1-x}{r_{1}(x)}\mathrm{d}x$.
Define two sequences of functions $\{r_{i}\},\{R_{i}\}$ recursively
as follows:
\[
r_{i+1}=r_{i}+\gamma-R_{i},R_{i+1}(p)=r_{i+1}(p)+\int_{1-p}^{1}\frac{1-x}{r_{i+1}(x)}\mathrm{d}x.
\]

Note that $r_{i},R_{i}$ are positive and continuous for every $i$. We first show $R_{i}\leq\gamma$ by induction. The base case for $i=1$ is trivial. Now we assume $R_i\leq \gamma$ for some $i$. This implies that $r_i\leq r_{i+1}$.  Then
Notice 
\begin{align*}
R_{i+1}(p)&=r_{i+1}(p)+\int_{1-p}^{1}\frac{1-x}{r_{i+1}(x)}\mathrm{d}x
\leq r_{i+1}(p) +\int_{1-p}^{1}\frac{1-x}{r_{i}(x)}\mathrm{d}x \\
&=r_{i+1}(p) + R_i(p) - r_i(p) = \gamma.
\end{align*}
Therefore $R_i\le \gamma$ for all $i$ and consequently $r_i\leq r_{i+1}$.

Observe that $r_{i}$ converges pointwise as
$r_{i}$ is bounded by $\gamma$ and monotonically increases. Let $r_{\infty}=\lim_{i\rightarrow\infty}r_{i}$. 

Moreover, since $r_{i+1}=r_{i}+\gamma-R_{i}$, $R_{\infty}=\lim_{i\rightarrow\infty}R_{i}\equiv\gamma$.
On the other hand, we have
\begin{align*}
\gamma=R_{\infty}(p) &=\lim_{i\rightarrow\infty}\left(r_{i}(p)+\int_{1-p}^{1}\frac{1-x}{r_{i}(x)}\mathrm{d}x\right)\\
&=r_{\infty}(p)+\lim_{i\rightarrow\infty}\int_{1-p}^{1}\frac{1-x}{r_{i}(x)}\mathrm{d}x.
\end{align*}
By the dominated convergence theorem, $\lim_{i\rightarrow\infty}\int_{1-p}^{1}\frac{1-x}{r_{i}(x)}\mathrm{d}x=\int_{1-p}^{1}\frac{1-x}{r_{\infty}(x)}\mathrm{d}x$
since $\frac{1-x}{r_{i}(x)}$ is bounded by $\frac{1-x}{r_{1}(x)}$. 

By taking limit in the second recurrence, we get
\[
r_{\infty}(p)=\gamma-\int_{1-p}^{1}\frac{1-x}{r_{\infty}(x)}dx
\]
which implies $r_{\infty}$ is  continuous and hence satisfies our requirement.
\end{proof}

Therefore, it is sufficient to consider functions $f$ that satisfy this optimality condition. A consequence is that $f(1-z)=\beta-1-\int_z^1 \frac{1-t}{f(t)}\mathrm{d}t$ is actually differentiable. Differentiating $1+f(1-z)+\int_z^1 \frac{1-t}{f(t)}\mathrm{d}t$ yields $-f'(1-z)-\frac{1-z}{f(z)}=0$, or equivalently, $$f(z)f'(1-z)=z-1.$$

Although this differential equation is atypical as $f(z)$ and $f'(1-z)$ are not taken at the same point, surprisingly it has closed form solutions, as given below.
\begin{lemma}
Let $r$ be a non-negative differentiable function on $[0,1]$ such that $r(z)r'(1-z)=z-1$. Then $$r(z)=\left(\frac{1+k}{2}-z\right)^{\frac{1+k}{2k}}\left(z+\frac{k-1}{2}\right)^{\frac{k-1}{2k}},$$where $k\geq1$. Moreover, $\frac{1-t}{r(t)}$ is decreasing for $t\in [0,1]$.
\end{lemma}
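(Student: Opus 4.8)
The plan is to exploit the hidden $z\leftrightarrow 1-z$ symmetry of the equation. Replacing $z$ by $1-z$ in $r(z)r'(1-z)=z-1$ gives the companion identity $r(1-z)r'(z)=-z$. Now consider the symmetric product $P(z):=r(z)r(1-z)$. Differentiating and substituting both identities, $P'(z)=r'(z)r(1-z)-r(z)r'(1-z)=-z-(z-1)=1-2z$, so $P(z)=-z^2+z+C$ for a constant $C$. Since $r\ge 0$ we have $C=P(0)=r(0)r(1)\ge 0$; moreover, $r(z)r'(1-z)=z-1\ne 0$ for $z<1$ forces $r>0$ on $[0,1)$ (in particular $r(0)>0$). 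Write $k:=\sqrt{1+4C}\ge 1$ and $z_{\pm}:=\frac{1\pm k}{2}$, so that $z_{+}+z_{-}=1$, $z_{+}-z_{-}=k$, and $P(z)=(z_{+}-z)(z-z_{-})$, which is positive throughout $[0,1]$ because $z_{-}\le 0\le 1\le z_{+}$.

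Next I would recover $r$ itself. From the companion identity, $r'(z)=-z/r(1-z)=-z\,r(z)/P(z)$, hence $(\ln r)'(z)=-z/P(z)=-z/\big((z_{+}-z)(z-z_{-})\big)$, valid on $[0,1)$ where $r>0$. A partial-fraction decomposition of the right-hand side has residues $-z_{+}/k$ at $z_+$ and $-z_{-}/k$ at $z_-$, and after integration (using $\int \frac{dz}{z_{+}-z}=-\ln(z_{+}-z)$) these become exactly the exponents in the claim: $\ln r(z)=\frac{z_{+}}{k}\ln(z_{+}-z)-\frac{z_{-}}{k}\ln(z-z_{-})+A=\frac{1+k}{2k}\ln(z_{+}-z)+\frac{k-1}{2k}\ln(z-z_{-})+A$ for some constant $A$. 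To pin down $A$, substitute the resulting formula back into $P(z)=r(z)r(1-z)$; since $z_{+}-(1-z)=z-z_{-}$ and $(1-z)-z_{-}=z_{+}-z$, the exponents telescope and one obtains $r(z)r(1-z)=e^{2A}(z_{+}-z)(z-z_{-})=e^{2A}P(z)$, forcing $e^{2A}=1$, i.e. $A=0$. This gives the stated closed form $r(z)=\big(\tfrac{1+k}{2}-z\big)^{\frac{1+k}{2k}}\big(z+\tfrac{k-1}{2}\big)^{\frac{k-1}{2k}}$ with $k\ge 1$ (extending to $z=1$ by continuity); conversely, one checks directly that every such $r$ solves the equation, with $k=1$ corresponding to the degenerate solution $r(z)=1-z$.

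For the last assertion, I would compute $\frac{d}{dt}\frac{1-t}{r(t)}=\frac{-r(t)-(1-t)r'(t)}{r(t)^2}$, so it suffices to show $r(t)+(1-t)r'(t)\ge 0$. Evaluating the original identity at $z=1-t$ gives $r(1-t)r'(t)=-t$, hence $r(t)+(1-t)r'(t)=\frac{r(t)r(1-t)-t(1-t)}{r(1-t)}=\frac{P(t)-t(1-t)}{r(1-t)}=\frac{C}{r(1-t)}\ge 0$, using $P(t)=t(1-t)+C$ and $r(1-t)>0$ on $[0,1)$. Thus $\frac{1-t}{r(t)}$ is non-increasing on $[0,1)$ — strictly decreasing when $C>0$ (equivalently $k>1$), and identically $1$ in the case $k=1$ — which is the required monotonicity.

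The main obstacle is precisely that the equation couples $r(z)$ with $r'(1-z)$ at two different points, so no standard ODE technique applies directly. The crux of the argument is the single observation that the symmetric product $P(z)=r(z)r(1-z)$ satisfies an honest first-order ODE; after that everything is bookkeeping (partial fractions, matching the integration constant against $P$, and the sign computation for the monotonicity claim). The only other point needing care is positivity: one must first rule out $r$ vanishing on $[0,1)$ before dividing by $r$ or taking logarithms, and then treat the boundary at $t=1$ separately, where $r$ may vanish (as it does when $k=1$).
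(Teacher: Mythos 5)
Your proof is correct and follows essentially the same route as the paper: the $z\leftrightarrow 1-z$ symmetrization, the product $P(z)=r(z)r(1-z)$ satisfying $P'(z)=1-2z$, the partial-fraction integration of $(\ln r)'$, fixing the multiplicative constant by substituting back into $P$, and the same sign computation $r(t)+(1-t)r'(t)=C/r(1-t)\ge 0$ for the monotonicity claim. The only difference is that you make the positivity of $r$ on $[0,1)$ and the boundary behavior at $t=1$ explicit, which the paper leaves implicit.
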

\begin{proof}
We have 
\[
r(p)r'(1-p)=p-1.
\]
Replacing $p$ by $1-p$, we get 
\begin{equation}
\label{eqn:p}
r(1-p)r'(p)=-p.
\end{equation}

Hence,
\begin{equation}
\label{eqn:pp2c}
(r(p)r(1-p))'=1-2p\implies r(p)r(1-p)=p-p^{2}+c
\end{equation}

for some $c$. Note that $r(0)r(1)=c \ge 0$. 
From Eqn~(\ref{eqn:p}) and (\ref{eqn:pp2c}), we get $r'(p)/r(p)=p/(p^{2}-p-c)$.
Let $k=\sqrt{1+4c}\geq 1$. By taking partial fraction and using $(\ln r(p))'=r'(p)/r(p)$,
\[
\frac{r'(p)}{r(p)}=\frac{1}{2k}\left(\frac{1+k}{p-\frac{1+k}{2}}-\frac{1-k}{p-\frac{1-k}{2}}\right)\implies r(p)=D\frac{\left|p-\frac{1+k}{2}\right|^{\frac{1+k}{2k}}}{\left|p-\frac{1-k}{2}\right|^{\frac{1-k}{2k}}}
\]
for some constant $D$. It is easy to check that $r(p)r(1-p)=D^{2}(p-p^{2}-c)\implies D=1$. Since $k\geq 1$, we get the required $r(p)$.

Now we show that
$\frac{1-t}{r(t)}$ is decreasing for $t\in [0,1]$. Taking the derivative of $\frac{1-t}{r(t)}$, we have $-1-(1-t)r'(t)/r(t)=-1-(1-t)t/(t^{2}-t-c)=c/(t^{2}-t-c)\leq 0$,
as desired.
\end{proof}

The final step is just to select the best $f$ from the family of solutions. Since $1+f(1-z)+\int_z^1 \frac{1-t}{f(t)}\mathrm{d}t$ is constant, it suffices to find the smallest $1+f(0)$, which corresponds to the case $k\approx1.1997$, the real fixed point of the hyperbolic cotangent function.\footnote{The optimal $k$ is closely related to the Laplace limit in the solution of Kepler's equation~\cite{weisstein}.}

\begin{theorem}
\label{thm:vcgeneral}
Let $f(z)=\left(\frac{1+k}{2}-z\right)^{\frac{1+k}{2k}}\left(z+\frac{k-1}{2}\right)^{\frac{k-1}{2k}}$, where $k\approx1.1997$. $GreedyAllocation(f)$ is 1.901-competitive for the fractional online vertex cover problem in general graphs.
\end{theorem}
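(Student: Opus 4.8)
The plan is to feed the proposed $f$ into the general $\beta$-competitiveness theorem for $GreedyAllocation$ established above: that theorem already guarantees $GreedyAllocation(f)$ is $\beta$-competitive against the optimal fractional vertex cover whenever $f\in\mathcal{F}$, where $\beta=\max_{z\in[0,1]}1+f(1-z)+\int_z^1\frac{1-t}{f(t)}\mathrm{d}t$. So the proof reduces to three verifications: (i) $f\in\mathcal{F}$; (ii) $\beta=1+f(0)$; and (iii) the stated choice $k\approx 1.1997$ makes $1+f(0)\le 1.901$ (and in fact minimizes it).

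For (i): since $k\approx 1.1997>1$, for every $z\in[0,1]$ we have $\frac{1+k}{2}-z\ge\frac{k-1}{2}>0$ and $z+\frac{k-1}{2}\ge\frac{k-1}{2}>0$, so both factors defining $f$ are strictly positive and continuous, hence $f$ is positive and continuous on $[0,1]$; and the closed-form lemma for the equation $r(z)r'(1-z)=z-1$ already records that $\frac{1-t}{f(t)}$ is decreasing for this $f$. Thus $f\in\mathcal{F}$ and the general theorem applies. For (ii): by that same lemma $f$ is differentiable and satisfies $f(z)f'(1-z)=z-1$. Writing $\Phi(z)=1+f(1-z)+\int_z^1\frac{1-t}{f(t)}\mathrm{d}t$, we get $\Phi'(z)=-f'(1-z)-\frac{1-z}{f(z)}$, and substituting $f'(1-z)=\frac{z-1}{f(z)}=-\frac{1-z}{f(z)}$ makes $\Phi'\equiv 0$. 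Hence $\Phi$ is constant on $[0,1]$, so $\beta=\Phi(1)=1+f(0)$ with $f(0)=\left(\frac{1+k}{2}\right)^{\frac{1+k}{2k}}\left(\frac{k-1}{2}\right)^{\frac{k-1}{2k}}$. (Alternatively one could quote the optimality-condition lemma, but for a fixed $f$ the direct differentiation is shortest.)

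For (iii) one chooses $k$ to minimize $1+f(0)$. Setting $g(k)=\ln f(0)=\frac{1+k}{2k}\ln\frac{1+k}{2}+\frac{k-1}{2k}\ln\frac{k-1}{2}$ and solving $g'(k)=0$, the stationarity condition collapses to the transcendental equation $\coth k=k$, whose unique real root is $k\approx 1.1997$; a sign check of $g'$ on either side shows this is the minimizer, and $k>1$ keeps $f\in\mathcal{F}$. Substituting this $k$ into the formula for $f(0)$ gives $f(0)\approx 0.9008$, so $GreedyAllocation(f)$ is $(1+f(0))\approx 1.901$-competitive, as claimed. Essentially all the structural content — the charging scheme, Lemma~\ref{lem:charging}, the reduction to the minimax problem~(\ref{eqn:opt}), and the solution of the governing ODE — is already in place, so the only remaining work is this one-variable calculus optimization together with the numerical verification; that is the sole (and mild) obstacle, amounting to deriving the condition $\coth k=k$ cleanly and certifying the numerics behind the constant $1.901$.
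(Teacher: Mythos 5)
Your proposal is correct and takes essentially the same route as the paper: it plugs the closed-form $f$ into the general $\beta$-competitiveness theorem, uses the ODE $f(z)f'(1-z)=z-1$ (whose solution family and the monotonicity of $\frac{1-t}{f(t)}$ are already established) to conclude that $1+f(1-z)+\int_z^1\frac{1-t}{f(t)}\,\mathrm{d}t$ is constant and hence equals $1+f(0)$, and then optimizes over $k$, where the stationarity condition $e^{2k}=\frac{k+1}{k-1}$ is exactly $\coth k=k$, giving $k\approx 1.1997$ and ratio $1+f(0)\approx 1.901$. The only cosmetic difference is that you certify constancy by directly differentiating the expression for the fixed $f$ rather than routing through the paper's optimality-condition lemma, which is precisely what the paper's concluding discussion amounts to.
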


Finally, we remark that our algorithm can be viewed as a generalization of the well-known greedy algorithm because the solution $f(z)=1-z$ (with $k=1$) is equivalent to a variant of the greedy algorithm.

\section{Online fractional matching in general graphs}
We give a primal-dual analysis of the algorithm given in the last section. 
%
A by-product of this primal-dual analysis is a $\frac{1}{1.901}\approx 0.526$-competitive algorithm for online fractional matching in general graphs.

Let $\beta\approx 1.901$ be the competitive ratio established in the last section and $f(z)$ be the same as that of Theorem~\ref{thm:vcgeneral}. Our primal-dual analysis shares some similarities with the one for online bipartite fractional matching by Buchbinder et al.~\cite{Buchbinder2007}.

%
%

Our algorithm $PrimalDual$ applies to both online fractional vertex cover and matching.  When restricted to the dual, it is identical to $GreedyAllocation$. 

\begin{algorithm}[h!]
\SetAlgoLined
\caption{$PrimalDual$}
\label{alg:general greedy}
\KwIn{Online graph $G=(V,E)$}
\KwOut{A fractional vertex cover $\{y_v\}$ of $G$ and a fractional matching $\{x_{uv}\}$.}
Let $T$ be the set of known vertices. Initialize $T=\emptyset$\;
\For{each online vertex $v$}
{
Maximize $y\le 1$, s.t., $\sum_{u\in N(v)\cap T} \max\{y-y_u,0\} \leq f(y)$\;
Let $X = \{u \in N(v)\cap T \,\mid\, y_u < y\}$\;
\For{each $u\in X$}
{
$y_u\leftarrow y$\;
$x_{uv}\longleftarrow \frac{y-y_u}{\beta}\left(1+\frac{1-y}{f(y)}\right)$\;
}
For each $u \in (N(v)\cap T)\setminus X$, $x_{uv}\longleftarrow 0$\;
$y_v \leftarrow 1-y$\;
$T\leftarrow T\cup \{v\}$\;
}
Output $\{y_v\}$ for all $v\in V$\;
\end{algorithm}

To analyze the performance, we claim that the following two invariants hold throughout the execution of the algorithm.

\textbf{Invariant 1:} $$\frac{y_{u}+f(1-z_u)+\int_{z_u}^{y_{u}}\frac{1-t}{f(t)}\mathrm{d}t}{\beta}\geq x_{u},$$ where $z_u$ is the potential of $u$ set upon its arrival, $y_u$ is the current potential of $u$ and $x_u = \sum_{v \in N(u)} x_{uv}$ is the sum of the potentials on the edges incident to $u$. Note that the LHS is at most 1 (see last section for details), which guarantees that the primal is feasible as long as the invariant holds.

\textbf{Invariant 2:} $$\sum_{u\in T} y_{u}=\beta\sum_{(u,v)\in E\cap T^2} x_{uv}$$

Invariant 2 guarantees that the primal and dual objective values are within a factor of $\beta$ from each other. By weak duality, this implies that the algorithm is $\beta$-competitive for online fractional vertex cover and $\frac{1}{\beta}$-competitive for online fractional matching in general graphs.
Note that both invariants trivially hold at the beginning. 

The idea behind Invariant 1 is to enforce some kind of correlation between $y_u$ and $x_u$. For instance, when $y_u$ is small, $x_u$ should not be excessively large because $x_u$ must be increased to (partially) offset any future increase in $y_u$ in order to maintain Invariant 2.


%

We claim that both invariants are preserved.
\begin{lemma}[Invariant 2]
\label{lem:inv2}
In each iteration of the algorithm, the increase in the dual objective value is exactly $\beta$ times that of the primal.
\end{lemma}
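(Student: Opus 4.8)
The plan is a direct, single-iteration accounting argument. Fix the online vertex $v$ being processed, let $y\le 1$ be the water level selected in that iteration, and for $u\in X=\{u\in N(v)\cap T: y_u<y\}$ write $y_u$ for the potential of $u$ \emph{just before} this iteration --- this is the value intended in the update $x_{uv}\leftarrow\frac{y-y_u}{\beta}\bigl(1+\frac{1-y}{f(y)}\bigr)$, read before $y_u$ is raised to $y$. First I would record the change in the dual objective $\sum_{u\in T}y_u$: the only variables that move are $y_v$, initialized to $1-y$, and $y_u$ for $u\in X$, each increasing by $y-y_u$, so
\[
\Delta_{\mathrm{dual}}=(1-y)+\sum_{u\in X}(y-y_u).
\]

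Next I would record the change in the primal objective $\sum_{(u,v)\in E\cap T^2}x_{uv}$. The only variables created are $x_{uv}$ for $u\in N(v)\cap T$; those with $u\notin X$ are set to $0$, and no previously assigned $x_{u'v'}$ is touched, so
\[
\Delta_{\mathrm{primal}}=\sum_{u\in X}x_{uv}=\frac{1}{\beta}\Bigl(1+\frac{1-y}{f(y)}\Bigr)\sum_{u\in X}(y-y_u).
\]
Now I would invoke the earlier lemma stating that in processing $v$ either $y=1$ or $\sum_{u\in N(v)}\max\{y-y_u,0\}=f(y)$, observing that this sum equals $\sum_{u\in X}(y-y_u)$.

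In the case $y<1$ we substitute $\sum_{u\in X}(y-y_u)=f(y)$, getting $\Delta_{\mathrm{primal}}=\frac1\beta\bigl(1+\frac{1-y}{f(y)}\bigr)f(y)=\frac1\beta\bigl(f(y)+1-y\bigr)$, while $\Delta_{\mathrm{dual}}=(1-y)+f(y)$; hence $\Delta_{\mathrm{dual}}=\beta\,\Delta_{\mathrm{primal}}$. In the case $y=1$, the term $\frac{1-y}{f(y)}$ vanishes (using $f(1)>0$, immediate from the closed form in Theorem~\ref{thm:vcgeneral} since $k>1$), so $\Delta_{\mathrm{primal}}=\frac1\beta\sum_{u\in X}(1-y_u)=\frac1\beta\Delta_{\mathrm{dual}}$, again the claimed identity. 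Since Invariant~2 holds vacuously before the first vertex arrives, this per-iteration identity gives it by induction, which together with Invariant~1 and weak duality yields $\beta$-competitiveness for fractional vertex cover and $\frac1\beta$-competitiveness for fractional matching.

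The argument is essentially bookkeeping, so there is no deep obstacle. The one point needing care is the pseudocode convention that the $x_{uv}$ update uses the \emph{pre-increment} value of $y_u$ (taken literally, with $y_u$ already equal to $y$, every new edge variable would be zero), together with treating the degenerate water level $y=1$ separately so that $\frac{1-y}{f(y)}$ is well defined; everything else follows immediately once the ``$y=1$ or the constraint is tight'' lemma is applied.
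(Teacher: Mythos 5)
Your proof is correct and is essentially the paper's own argument: compute the dual increment $(1-y)+\sum_{u\in X}(y-y_u)$ and the primal increment $\frac{1}{\beta}\bigl(1+\frac{1-y}{f(y)}\bigr)\sum_{u\in X}(y-y_u)$, then invoke the fact that either $y=1$ or $\sum_{u\in X}(y-y_u)=f(y)$ to equate them. Your explicit handling of the $y=1$ case and of the pre-increment reading of $y_u$ in the pseudocode is just a more careful write-up of the same reasoning.
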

\begin{proof}
The dual increment is $$1-y+\sum_{u\in X}(y-y_u)$$ and the primal increment is $$\sum_{u\in X} \frac{y-y_u}{\beta}\left( 1 + \frac{1-y}{f(y)}\right).$$
Thus it suffices to show that $1-y=\sum_{u\in X}(y-y_u)\frac{1-y}{f(y)}$. This just follows from Lemma~\ref{lem:charging}, which states that we have either $y=1$ or $\sum_{u\in X}(y-y_u)=f(y)$.
\end{proof}
\begin{lemma}[Invariant 1]
\label{lem:inv1}
After processing online vertex $v$, we have $x_v\leq \frac{y_v+f(1-y_v)}{\beta}$ and $x_u\leq \frac{y+f(1-z_u)+\int_{z_u}^{y}\frac{1-t}{f(t)}\mathrm{d}t}{\beta}$ for $u\in X$.
\end{lemma}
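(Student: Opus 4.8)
The plan is to prove the lemma as the inductive step of a proof, by induction on the order in which the online vertices arrive, that Invariant~1 holds for every vertex of $T$ at all times; the lemma is exactly the assertion that the invariant is restored immediately after $v$ is processed, granting that it held for all of $T$ beforehand. First I would observe that processing $v$ changes the state only of $v$ itself and of the vertices $u\in X$: for $w\in(N(v)\cap T)\setminus X$ the algorithm sets $x_{vw}=0$ and leaves $y_w\ge y$ untouched, so Invariant~1 is trivially preserved for such $w$ and for every vertex not adjacent to $v$. Hence it suffices to establish the two displayed bounds — one for the newly created vertex $v$, for which the potential assigned on arrival is $z_v=y_v=1-y$ so that the integral term degenerates, and one for each $u\in X$, whose potential is raised from its old value $y_u$ to the current water level $y$.

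For the bound on $x_v$: I would write $x_v=\sum_{u\in X}x_{uv}=\frac{1}{\beta}\bigl(1+\frac{1-y}{f(y)}\bigr)\sum_{u\in X}(y-y_u)$, reading the $x_{uv}$ update as using the pre-increment value of $y_u$, and then invoke the earlier lemma guaranteeing that either $y=1$ or $\sum_{u\in X}(y-y_u)=f(y)$. In the second case this yields $x_v=\frac{f(y)+(1-y)}{\beta}=\frac{y_v+f(1-y_v)}{\beta}$ exactly, using $y_v=1-y$; in the case $y=1$ the factor $\frac{1-y}{f(y)}$ vanishes while $\sum_{u\in X}(y-y_u)\le f(y)=f(1)$, giving $x_v\le\frac{f(1)}{\beta}=\frac{y_v+f(1-y_v)}{\beta}$. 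Either way the first inequality follows.

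For the bound on $u\in X$: after processing $v$ the edge-sum becomes $x_u^{\mathrm{new}}=x_u^{\mathrm{old}}+x_{uv}=x_u^{\mathrm{old}}+\frac{y-y_u}{\beta}\bigl(1+\frac{1-y}{f(y)}\bigr)$, and the induction hypothesis gives $x_u^{\mathrm{old}}\le\frac{1}{\beta}\bigl(y_u+f(1-z_u)+\int_{z_u}^{y_u}\frac{1-t}{f(t)}\,\mathrm{d}t\bigr)$. Substituting and cancelling $y_u+(y-y_u)=y$, the target $x_u^{\mathrm{new}}\le\frac{1}{\beta}\bigl(y+f(1-z_u)+\int_{z_u}^{y}\frac{1-t}{f(t)}\,\mathrm{d}t\bigr)$ reduces to the single scalar inequality $(y-y_u)\frac{1-y}{f(y)}\le\int_{y_u}^{y}\frac{1-t}{f(t)}\,\mathrm{d}t$. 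This is immediate from the standing hypothesis that $\frac{1-t}{f(t)}$ is decreasing, so that $\frac{1-t}{f(t)}\ge\frac{1-y}{f(y)}$ for all $t\in[y_u,y]$ — indeed it is the same integral estimate already packaged in Lemma~\ref{lem:charging}.

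I expect no deep obstacle here; the work is in the bookkeeping. The points to pin down carefully are: that the $x_{uv}$ update must be interpreted with the pre-increment $y_u$ (otherwise it is vacuously zero); that the term $f(1-z_u)$ is genuinely frozen once $u$ has arrived, so it does not interfere with the telescoping; and that the boundary case $y=1$ does not spoil the equality used for $v$ (it only weakens it to an inequality, which is still enough). With both invariants maintained, combining this lemma with Lemma~\ref{lem:inv2} and weak duality gives $\beta$-competitiveness for online fractional vertex cover and $\tfrac{1}{\beta}$-competitiveness for online fractional matching in general graphs.
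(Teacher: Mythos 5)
Your proposal is correct and follows essentially the same route as the paper: the bound on $x_u$ for $u\in X$ is obtained exactly as in the paper (induction on the previous invariant plus $x_{uv}\leq\frac{1}{\beta}\bigl(y-y_u+\int_{y_u}^{y}\frac{1-t}{f(t)}\,\mathrm{d}t\bigr)$ from monotonicity of $\frac{1-t}{f(t)}$), and your bound on $x_v$, computed directly from the update rule together with ``either $y=1$ or $\sum_{u\in X}(y-y_u)=f(y)$,'' is equivalent to the paper's step, which routes the same facts through Invariant~2. Your remark that the $x_{uv}$ assignment must be read with the pre-increment value of $y_u$ is a correct reading of an ordering slip in the pseudocode.
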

\begin{proof}
Note that $x_v=\sum_{u\in X}x_{uv}$ is just the increase in the primal objective value. By Invariant 2, $x_v=\frac{1-y+\sum_{u\in X}(y-y_u)}{\beta}$. Our claim for $x_v$ follows since $y_v=1-y$ and $\sum_{u\in X}(y-y_u)\leq f(y)$.

By Invariant 1, the previous $x_u$ satisfies $$x_u-x_{uv}\leq\frac{y_{u}+f(1-z_u)+\int_{z_u}^{y_{u}}\frac{1-t}{f(t)}\mathrm{d}t}{\beta}.$$
This proof is finished by noticing that $$x_{uv}=\frac{y-y_u}{\beta}\left( 1 + \frac{1-y}{f(y)}\right)\leq  \frac{1}{\beta} \left( y-y_u+ \int_{y_u}^y \frac{1-t}{f(t)}\mathrm{d}t\right),$$ as $\frac{1-t}{f(t)}$ is a decreasing function.
\end{proof}

Finally, it is clear that the dual is always feasible. The primal is feasible because $x\geq 0$ and Invariant 1 guarantees that $x_v\leq 1$, as discussed earlier. Combining this and the two lemmas, we have our main result.
\begin{theorem}
\label{thm:pdgeneral}
Our algorithm is $\beta\approx 1.901$-competitive for online fractional vertex cover and $\frac{1}{\beta}\approx 0.526$-competitive for online fractional matching for general graphs.
\end{theorem}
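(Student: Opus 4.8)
The plan is to derive Theorem~\ref{thm:pdgeneral} by combining the two invariants with weak duality, so that the actual work reduces to two things: (i) checking that the primal matching and the dual vertex cover produced by \emph{PrimalDual} are both feasible, and (ii) checking that Invariants~1 and~2 are preserved in every iteration. Both parts are essentially bookkeeping layered on top of Lemmas~\ref{lem:charging}, \ref{lem:inv1} and~\ref{lem:inv2}, which are already available.

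First I would settle feasibility. The dual $\{y_v\}$ is a feasible fractional vertex cover for the same reason as in \emph{GreedyAllocation}: when $v$ is processed we raise $y_u$ to at least the water level $y$ for every $u\in N(v)\cap T$ and set $y_v=1-y$, so $y_u+y_v\ge y+(1-y)=1$ on every edge incident to $v$; since potentials never decrease this constraint is preserved forever. For the primal, non-negativity of $x$ is immediate from the update rule $x_{uv}=\frac{y-y_u}{\beta}\bigl(1+\frac{1-y}{f(y)}\bigr)$ with $y\ge y_u$. The only substantive point is the degree constraint $x_u\le 1$, which I would extract from Invariant~1: its left-hand side equals $\frac{1}{\beta}\bigl(y_u+f(1-z_u)+\int_{z_u}^{y_u}\frac{1-t}{f(t)}\,\mathrm{d}t\bigr)$, and since $y_u\le 1$ and the integrand is non-negative, the numerator is at most $1+f(1-z_u)+\int_{z_u}^{1}\frac{1-t}{f(t)}\,\mathrm{d}t\le\beta$ by the very definition of $\beta$; hence $x_u\le 1$.

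Next I would verify that the invariants hold throughout the run. Both hold vacuously before any vertex arrives. For the inductive step when $v$ is processed, Invariant~2 follows from Lemma~\ref{lem:inv2}, which shows the dual objective and $\beta$ times the primal objective grow by the same amount (using Lemma~\ref{lem:charging} to get $1-y=\sum_{u\in X}(y-y_u)\frac{1-y}{f(y)}$). For Invariant~1 I would argue vertex by vertex: vertices outside $N(v)\cup\{v\}$ and vertices in $(N(v)\cap T)\setminus X$ are untouched (for the latter $x_{uv}=0$), so their invariant is inherited unchanged; for $u\in X$ and for $v$ itself the two required inequalities are precisely the two statements of Lemma~\ref{lem:inv1}, noting that for $v$ we have $z_v=y_v=1-y$, so the integral term vanishes and the bound reads $x_v\le\frac{y_v+f(1-y_v)}{\beta}$.

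Finally I would assemble the conclusion via weak duality. Let $ALG_D=\sum_{v\in V}y_v$ and $ALG_P=\sum_{(u,v)\in E}x_{uv}$ be the objective values at termination. Invariant~2 gives $ALG_D=\beta\cdot ALG_P$. Since $\{x_{uv}\}$ is a feasible fractional matching and $\{y_v\}$ a feasible fractional vertex cover, weak duality yields $ALG_P\le OPT_{\mathrm{match}}\le OPT_{\mathrm{VC}}$ and $ALG_D\ge OPT_{\mathrm{VC}}\ge OPT_{\mathrm{match}}$; combining with $ALG_D=\beta\cdot ALG_P$ gives $ALG_D\le\beta\cdot OPT_{\mathrm{VC}}$ and $ALG_P\ge\frac1\beta\,OPT_{\mathrm{match}}$, which are exactly the two claimed competitive ratios. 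The main obstacle, to the extent there is one, is the case analysis needed to confirm that Invariant~1 survives each iteration together with the observation that Invariant~1 is precisely the strengthening that forces primal feasibility ($x_u\le 1$); once Lemmas~\ref{lem:inv1} and~\ref{lem:inv2} are in hand, the remaining steps are routine.
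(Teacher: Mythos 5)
Your proposal is correct and follows essentially the same route as the paper: dual feasibility is immediate from the water-level update, primal feasibility ($x_u\le 1$) comes from Invariant~1 together with the definition of $\beta$, the invariants are maintained by Lemmas~\ref{lem:inv1} and~\ref{lem:inv2}, and the two competitive ratios then follow from Invariant~2 plus weak duality. You simply spell out the bookkeeping (the vertex-by-vertex case analysis and the final weak-duality assembly) that the paper leaves implicit.
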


%

It is possible to extend our algorithm to the vertex-weighted fractional vertex cover problem and the fractional b-matching problem, as shown in the appendix.
\begin{theorem}
\label{thm:vertexweighted}
There exists an algorithm that is $\beta\approx 1.901$-competitive for online vertex-weighted fractional vertex cover and $\frac{1}{\beta}\approx 0.526$-competitive for online capacitated fractional matching for general graphs.
\end{theorem}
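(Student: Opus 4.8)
The plan is to promote the $PrimalDual$ algorithm to the weighted setting by letting the vertex weights enter the allocation constraint linearly, keeping the same allocation function $f$ and the same $\beta\approx 1.901$, and then to replay both analyses with the weights carried through. Concretely, when an online vertex $v$ with weight $w_v$ arrives I would pick the largest $y\le 1$ with $\sum_{u\in N(v)\cap T} w_u\max\{y-y_u,0\}\le w_v f(y)$, set $y_v\leftarrow 1-y$, raise every $y_u$ for $u\in N(v)\cap T$ to $\max\{y_u,y\}$, and for each $u$ in $X:=\{u\in N(v)\cap T:y_u<y\}$ set $x_{uv}\leftarrow\frac{w_u(y-y_u)}{\beta}\left(1+\frac{1-y}{f(y)}\right)$, with $x_{uv}\leftarrow 0$ otherwise. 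The guiding intuition is that a unit of potential on neighbor $u$ now costs $w_u$, and the budget for processing $v$ is $w_v$ times the old allocation $f(y)$, so that processing $v$ costs exactly $w_v$ times its unweighted value $1-y+f(y)$.

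The first step is the weighted analogue of the structural lemma: the map $t\mapsto\sum_{u\in N(v)\cap T} w_u\max\{t-y_u,0\}-w_v f(t)$ is continuous and, at $t=0$, equals $-w_v f(0)<0$, so by the intermediate value theorem the chosen $y$ satisfies either $y=1$ or $\sum_{u\in X} w_u(y-y_u)=w_v f(y)$; a zero-weight vertex is degenerate and is handled by setting $y_v=1$ and leaving the matching untouched. Next I would prove the weighted version of Lemma~\ref{lem:charging}: if $\sum_{u\in X} w_u(y-y_u)=w_v f(y)$ and $y\ge y_u$ for all $u\in X$, then $w_v(1-y)\le\sum_{u\in X} w_u\left(F(y)-F(y_u)\right)$. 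Its proof is word-for-word the one in the excerpt once the weights $w_u$ are pulled through the sum and the monotonicity of $\frac{1-t}{f(t)}$ is used to bound $\int_{y_u}^{y}\frac{1-t}{f(t)}\,\mathrm{d}t\ge(y-y_u)\frac{1-y}{f(y)}$.

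Given these two ingredients, the cover bound follows the proof that $GreedyAllocation(f)$ is $\beta$-competitive against the optimal fractional cover, with weights inserted everywhere. Fixing an optimal fractional cover $\{y_v^*\}$ of value $\sum_v w_v y_v^*$, when $v$ is processed I would charge $\left(f(y)+1-y\right)w_v y_v^*$ to $v$ and $\left(y-y_u+F(y)-F(y_u)\right)w_u y_u^*$ to each $u\in X$. The two inequalities $f(y)\,w_v y_v^*+\sum_{u\in X}(y-y_u)w_u y_u^*\ge\sum_{u\in X} w_u(y-y_u)$ and $(1-y)\,w_v y_v^*+\sum_{u\in X}\left(F(y)-F(y_u)\right)w_u y_u^*\ge w_v(1-y)$ — the first from $y_v^*+y_u^*\ge1$ combined with $\sum_{u\in X} w_u(y-y_u)\le w_v f(y)$, the second from $y_u^*\ge1-y_v^*$, non-negativity of $F(y)-F(y_u)$, and the weighted Lemma~\ref{lem:charging} — together show that the charge covers the cost increment $w_v(1-y)+\sum_{u\in X} w_u(y-y_u)$. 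Since $w_v$ and $y_v^*$ are fixed over the life of $v$, the charges to $v$ telescope to at most $\left(1+f(1-y)+\int_{1-y}^{1}\frac{1-t}{f(t)}\,\mathrm{d}t\right)w_v y_v^*\le\beta\,w_v y_v^*$, giving $\beta$-competitiveness of the cover. For the matching, I would verify the weighted Invariant~2, $\sum_{u\in T} w_u y_u=\beta\sum_{(u,v)\in E\cap T^2} x_{uv}$, which holds by the choice of $x_{uv}$ exactly as in Lemma~\ref{lem:inv2} (using the weighted structural lemma in place of the unweighted identity), and the weighted Invariant~1, $x_u\le\frac{w_u}{\beta}\left(y_u+f(1-z_u)+\int_{z_u}^{y_u}\frac{1-t}{f(t)}\,\mathrm{d}t\right)\le w_u$ with $z_u$ the potential of $u$ set on arrival, by induction as in Lemma~\ref{lem:inv1}, using $x_{uv}\le\frac{w_u}{\beta}\left(y-y_u+\int_{y_u}^{y}\frac{1-t}{f(t)}\,\mathrm{d}t\right)$ and $y_u+f(1-z_u)+\int_{z_u}^{y_u}\frac{1-t}{f(t)}\,\mathrm{d}t\le1+f(1-z_u)+\int_{z_u}^{1}\frac{1-t}{f(t)}\,\mathrm{d}t\le\beta$. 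Primal and dual feasibility, together with Invariant~2 and weak duality, then yield the $\frac{1}{\beta}$ bound for capacitated fractional matching.

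I expect the only genuine work to be bookkeeping rather than new ideas: one must choose the weighting of the allocation constraint (the $w_u$ on neighbors versus the $w_v$ scaling of $f$) so that \emph{all} of the inequalities above rebalance at once — in particular the pair resting on $y_u^*+y_v^*\ge1$ — and one must check that degenerate zero-weight (equivalently zero-capacity) vertices do not break monotonicity or feasibility. Once the constraint $\sum_{u\in N(v)\cap T} w_u\max\{y-y_u,0\}\le w_v f(y)$ is fixed, every remaining step is a mechanical re-derivation of the unweighted proofs with weight factors carried along, and since $\beta$ and $f$ are unchanged there is nothing new to optimize.
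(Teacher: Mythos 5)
Your proposal is correct and follows essentially the same route as the paper's appendix proof: the same weighted allocation constraint $\sum_{u\in N(v)\cap T} w_u\max\{y-y_u,0\}\le w_v f(y)$ with unchanged $f$ and $\beta$, the same update $x_{uv}=\frac{w_u(y-y_u)}{\beta}\left(1+\frac{1-y}{f(y)}\right)$, the same weighted version of Lemma~\ref{lem:charging}, and the same two weighted invariants verified exactly as in the unweighted primal--dual analysis. Your additional charging argument against the optimal fractional cover is not needed once the invariants are in place, but it matches the paper's remark that such a charging-based analysis of the weighted cover part is also possible.
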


\section{Hardness Results}

In this section, we obtain new hardness results in our model. All of our hardness results are obtained by considering appropriate bipartite graphs.
Let $G=(L,R,E)$ be a bipartite graph with left vertices $L$ and right vertices $R$. We study different variants of the online vertex cover and matching problems by imposing certain constraints on the vertex arrival order.

\begin{itemize}
\item {\bf 1-alternation.} The left vertices $L$ are offline and the right vertices in $R$ arrive online. When a vertex $v\in R$ arrives, all its incident edges are revealed. This is the case studied in the literature.
\item {\bf $k$-alternation:}  There are $k$ phases and $L_0\subseteq L$ is the set of offline vertices. In each phase $1\leq i\leq k$, if $i$ is odd (resp. even), vertices from a subset of $R$ (resp. $L$) arrive one by one. 
The {\em no alternation} case corresponds to $k =1$.
Note that the case $k=\infty$ effectively removes any constraint on the vertex arrival order, and is called the unbounded alternation case below.
\item {\bf Unbounded alternation:} The vertices in $L\cup R$ arrive in an arbitrary order.
\end{itemize}

\subsection{Lower bounds for the online vertex cover problem}

We give lower bounds on the competitive ratios for online bipartite vertex cover with 2- and 3-alternation, and an upper bound for online bipartite matching with 2-alternation. These hardness results also apply to the more general problems of online vertex cover and matching in general graphs.
\begin{proposition}
There is a lower bound of $1+\frac{1}{\sqrt{2}}\approx 1.707$
for online bipartite vertex cover with 2-alternation.
\end{proposition}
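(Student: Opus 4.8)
## Proof plan for the 2-alternation lower bound

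The plan is to construct an adversarial bipartite instance with two alternation phases and argue that any (possibly randomized) online algorithm must build a fractional cover of size at least $(1+1/\sqrt 2)\cdot OPT$ in the worst case. By Yao's principle it suffices to exhibit a distribution over inputs on which every deterministic algorithm is bad, but in fact the cleaner route here is a direct adversary argument: the adversary commits to a family of instances sharing a common prefix, watches the algorithm's response to the prefix, and then picks the continuation that hurts the most. Since the cover potentials are monotone and the first-phase decisions are irrevocable, the algorithm is forced to hedge, and the hedging cost is exactly what we want to lower-bound.

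The construction I would use is the natural two-level gadget. In phase~1 (right vertices online) send a collection of right vertices so that, after the algorithm responds, there is a set $S$ of left vertices carrying a ``committed'' amount of potential $a$ each — think of a complete bipartite block between the offline left side and a batch of identical right vertices, so that by symmetry the algorithm may as well treat all the relevant left vertices uniformly with potential $a$ and each right vertex with potential $1-a$. The key point: to cover those first-phase edges the algorithm has paid roughly in proportion to $a$ on the left and $1-a$ per right vertex, and this is sunk. In phase~2 (left vertices online) the adversary then attaches, to each left vertex of $S$, a fresh online left vertex — wait, phase~2 sends left vertices, so more precisely: the adversary in phase~2 brings new left vertices adjacent (through the already-arrived right vertices, or through a second block) in such a way that the offline optimum can now be realized entirely on the newly arrived side, making the phase-1 left potential $a$ pure overhead. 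One then computes $ALG$ and $OPT$ on the two extreme continuations — (i) stop after phase~1, where $OPT$ wants the right vertices and the algorithm wasted $a$ on the left; (ii) continue with phase~2, where $OPT$ shifts and the algorithm's phase-1 spending is doubly wasted — and take the ratio $\max$ over the adversary's choice, then $\min$ over $a\in[0,1]$ (the algorithm's only free parameter). Balancing the two cases gives the equation whose solution is $a=1-1/\sqrt2$ and ratio $1+1/\sqrt2$.

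Concretely, I expect the bound to come out of an elementary $2\times 2$ balancing: if the algorithm puts $a$ on the hedge side, one continuation costs it about $1+a$ relative to its optimum and the other costs about $\frac{1}{1-a}$ (or $\frac{1}{a}$ after the second alternation forces a second layer of the same trick), and $\max\{1+a,\ \tfrac{1}{1-a}\cdot(\text{something})\}$ is minimized when the two are equal; $1+a = \frac{1}{1-a}$ does not quite give $\sqrt2$, so the correct pairing is $(1+a)$ against $\frac{2-a}{1}$-type expressions coming from the two-phase structure, and I would pin down the exact gadget sizes (number of right vertices per block, number of phase-2 left vertices) so the algebra yields exactly $a^2 = \tfrac12$. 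The randomized lower bound then follows by running this on a distribution: randomize which continuation is presented (stop after phase~1 vs.\ proceed), with probabilities chosen so the expected ratio against any fixed phase-1 response is at least $1+1/\sqrt2$; since the phase-1 response of any randomized algorithm is itself a distribution over values of $a$, linearity and convexity of the per-$a$ cost finish it.

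The main obstacle, I expect, is not the arithmetic but verifying the symmetrization/averaging step rigorously: one must argue that a general algorithm cannot gain by treating the (identical) right vertices of a block asymmetrically, and that spreading potential unevenly across the left hedge set is never better than the uniform choice $a$. This is handled by the standard averaging argument — replace any algorithm's behavior by its ``average'' over the automorphism group of the gadget, which is at least as good and is symmetric — but it has to be stated carefully for the fractional, monotone setting, and one must make sure the averaged solution is still a feasible monotone fractional cover at every intermediate time (it is, since feasibility constraints $y_u+y_v\ge 1$ are preserved under averaging and monotonicity is preserved coordinatewise). Once symmetrization reduces the adversary's job to a one-parameter optimization over $a$, the rest is the routine balancing sketched above.
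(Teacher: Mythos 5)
Your high-level strategy (an adaptive adversary on complete bipartite gadgets, monotone sunk potentials, the threat of extending either side, and reducing randomized algorithms to deterministic fractional ones) is in the same spirit as the paper's proof, but the quantitative core is missing, and the specific minimax you set up cannot reach the constant. The paper does not work with a single hedging parameter $a$ for one block and one adversarial decision point. It presents the phase-1 right vertices one at a time and extracts a constraint for \emph{every} prefix: after the $i$-th right vertex arrives, the threat of switching to phase 2 with infinitely many left vertices joined to all of $R_1$ (whose optimum is $i$, and which drives all $R_1$ potentials to $1$) forces $\sum_{u\in L_0} y_u \le i\beta$, and then pigeonhole over the $d$ offline vertices forces $y_{v_i}\ge 1-i\beta/d$ --- a bound that degrades with the arrival index $i$. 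Separately, the threat of continuing phase 1 forever forces $\sum_{v\in R_1} y_v\le d\beta$, and summing $1-i\beta/d$ over $i\le \sqrt{2}\,d$ against this aggregate bound gives $\beta\ge 1/\sqrt{2}$. Your symmetrization step --- assuming all right vertices of the block end with the same potential $1-a$ --- is exactly what this argument shows you cannot have: the right vertices are adjacency-identical but not time-identical, the averaging-over-automorphisms argument does not apply across arrival order, and in the worst case $y_{v_i}$ genuinely must vary with $i$. (Uniformity over the \emph{offline} left vertices is harmless, and the paper only needs it in the weak pigeonhole form.)

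Moreover, the ``one block, two continuations, balance over $a$'' game provably tops out far below $1+1/\sqrt{2}$: with $d$ offline left vertices, a block of $m=cd$ right vertices, and only the two branches ``stop now'' versus ``append infinitely many phase-2 left vertices'', writing $A=\sum_{u\in L_0}y_u$ at the end of the block, the two ratios are at least $c-(c-1)A/d$ and $1+A/(cd)$, and an algorithm tailored to this family balances them at $1+\frac{c-1}{c^2-c+1}\le 4/3$ (attained at $c=2$). So no choice of gadget sizes makes your $2\times 2$ balancing yield $a^2=\tfrac12$; to go beyond $4/3$ the adversary must be allowed to stop phase 1 after \emph{every} arrival, which reintroduces the per-prefix constraints above and destroys the one-parameter reduction --- and this is precisely the part your plan defers (``I would pin down the exact gadget sizes\ldots''). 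Two smaller points: your phase-2 description of attaching fresh left vertices ``to each left vertex of $S$'' is not bipartite-consistent --- the phase-2 left vertices must be joined to the already-arrived right vertices, and their only role is as a threat forcing the $R_1$ potentials to $1$, not to relocate the optimum; and the reduction to deterministic algorithms is cleaner than Yao here: the expected potentials of any randomized algorithm already form a deterministic feasible monotone \emph{fractional} algorithm with the same expected cost, which is how the paper justifies ``without loss of generality deterministic.''
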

\begin{proof}
It suffices to establish the result for the fractional version of the problem. Suppose that an algorithm $A$ is $(1+\beta)$-competitive. 
Without loss of generality, we may assume that $A$ is deterministic.
Our approach is to bound $\beta$ by considering a family of complete bipartite graphs. Thus a new online vertex is always adjacent to all the vertices on the other side.

Let $|L_0|=d$ and $y$ be the fractional vertex cover maintained by $A$. 
We claim that after processing the $i$-th vertex in $R_1$, we have
$$\sum_{u\in L_0} y_u\leq i\beta.$$ 
The reason is that the adversary can generate infinitely many left online vertices in phase 2 and hence $y_v$, for any $v\in R_1$, converges to 1 (otherwise, if $y_v$, which monotonically increases, converges to some $l<1$, then $y_u\geq 1-l$ for $u\in L_2$ and the cost of the vertex cover found is unbounded while the optimal solution is at most $i$).


Let $v_i^{(1)}$ be the $i$th vertex in $R_1$. Next we claim that $$y_{v_{i}^{(1)}}\geq 1-\frac{i\beta}{d}.$$ Since $\sum_{u\in L_0} y_u\leq i\beta$ after processing $v_{i}^{(1)}$, by Pigeonhole Principle there must be some $y_u\leq \frac{i\beta}{d}$. To maintain a valid vertex cover, we need $y_{v_{i}^{(1)}}\geq 1-\frac{i\beta}{d}$.

Finally, we have 
\begin{equation}
\label{[eqn:yv]}
\sum_{v\in R_1} y_v\leq d\beta.
\end{equation}
Otherwise, the adversary can generate infinitely many online vertices to append $R_1$ in which case 
$y_u$ will be increased to $1$ eventually for all $u\in L_0$, i.e., $\sum_{u\in L_0}y_u = d$.
This contradicts the fact that $A$ is $(1+\beta)$-competitive.

Now by taking $|R_1|=\sqrt{2}d$, we get$$\sum_{i=1}^{\sqrt{2}d}\left( 1-\frac{i\beta}{d}\right) \leq\sum_{v\in R_1} y_v\leq d\beta ,$$
from which our desired result follows by taking $d\longrightarrow\infty$.
\end{proof}

The proofs of the next two results are in the appendix.
\begin{proposition}
There is a lower bound of $1+\sqrt{\frac{1}{2}\left(1+\frac{1}{e^2}\right)}\approx1.753$
for online bipartite vertex cover with 3-alternation.
\end{proposition}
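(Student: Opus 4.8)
The plan is to generalize the argument used for 2-alternation, adding one more phase and pushing the calculation through with the bipartite vertex cover structure in mind. As before, it suffices to treat the fractional problem, and we may assume the algorithm $A$ is deterministic and $(1+\beta)$-competitive. The adversary uses complete bipartite graphs on each newly revealed block of vertices, so a new online vertex is always adjacent to every previously arrived vertex on the opposite side. Let $L_0$ be the offline left vertices with $|L_0|=d$, let $R_1$ be the first online block (on the right), and let $L_2$ be the second online block (on the left); these are the three phases.

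First I would re-establish, exactly as in the 2-alternation proof, that processing the $i$-th vertex $v_i^{(1)}$ of $R_1$ keeps $\sum_{u\in L_0} y_u \le i\beta$ (otherwise a long tail of online left vertices in a later phase forces an unbounded cover against an optimum of size $i$), and hence by pigeonhole $y_{v_i^{(1)}} \ge 1 - i\beta/d$. Now comes the new ingredient: after all of $R_1$ has arrived (say $|R_1| = c_1 d$ for a constant $c_1$ to be optimized), the algorithm has committed potential $\sum_{u\in L_0} y_u \le c_1 d\,\beta$ on the offline side, and we examine what phase 2 can extract. In phase 2 the adversary reveals left online vertices $L_2$ adjacent to all of $R_1$. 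For each such $v\in L_2$, feasibility against the edges to $R_1$ forces $y_v \ge 1 - \min_{w\in R_1} y_w$; but more usefully, after processing the $j$-th vertex of $L_2$ the total $\sum_{w\in R_1} y_w$ plus $\sum_{u\in L_2,\text{first }j} y_u$ must still be at most (roughly) $j\beta$ more than what was forced by $R_1$ alone — again justified by threatening an unbounded continuation so that $y_w \to 1$ for the relevant vertices. The key structural fact to exploit is that the potentials $y_{v_i^{(1)}}$ on $R_1$ are already bounded below by $1 - i\beta/d$, so the "free" room for the algorithm to keep $R_1$-potentials low is constrained, and then the $1/e^2$ term should emerge from the same geometric-decay phenomenon that produces $1/e$ in the one-phase ski-rental bound — iterated once because of the extra alternation.

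Concretely, I would set up two inequalities: one bounding $\sum_{v\in R_1} y_v$ from above (by $d\beta$, as in Eqn.~(\ref{[eqn:yv]}), via the unbounded-append argument) and hence combined with $y_{v_i^{(1)}} \ge 1 - i\beta/d$ giving a relation among $\beta$ and $c_1$; and a second one analyzing phase $2$, where the adversary first forces the $R_1$ potentials up (so that their contribution is near $|R_1|$) and then, within phase $2$, replays the ski-rental-style lower bound on the $L_2$ potentials relative to how much the algorithm was willing to raise $R_1$ cheaply. Summing $\sum_{u\in L_2}(1 - (\text{something})/|R_1|)$ against $\sum_{u\in L_2} y_u \le |R_1|\beta$ and optimizing $|R_1| = \sqrt{\tfrac12(1+e^{-2})}\,d$ (or whatever the optimization dictates) should yield, as $d\to\infty$, the bound $1+\beta \ge 1+\sqrt{\tfrac12(1+e^{-2})}$. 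The main obstacle is the middle step: correctly quantifying, in phase $2$, how the lower bounds $y_{v_i^{(1)}} \ge 1 - i\beta/d$ on the $R_1$ potentials translate into a lower bound on the phase-$2$ cost — this is where the $e^{-2}$ (versus the $e^{-1}$ of one alternation and the clean $1/\sqrt2$ of the crude 2-alternation bound) has to be extracted, and it requires carefully choosing the adversary's continuation at each sub-step so that the monotone-cover constraint bites in the right place while keeping $OPT$ small. Everything else is the same bookkeeping as in the 2-alternation proposition plus a routine one-variable optimization over the block size.
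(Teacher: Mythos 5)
Your plan stalls exactly where the real work is, and the instance you describe is not even a 3-alternation instance. You list the phases as $L_0$, $R_1$, $L_2$; but with only two online blocks this is the 2-alternation setting, and the technique then caps out at $1+\tfrac{1}{\sqrt2}$. The third phase $R_3$ (right vertices arriving after $L_2$) is not optional: it is the \emph{threat} of appending unboundedly many vertices to $R_3$ that forces $\sum_{v\in R_1}y_v\leq (d+j)\beta$ after the $j$-th vertex of $L_2$ (since that continuation drives all of $L_0\cup L_2$ to potential $1$ while $OPT=d+j$), and only from this does the pigeonhole bound $y_{u_j}\geq 1-(d+j)\beta/i$ on the phase-2 potentials follow. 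Your substitute inequality --- that $\sum_{w\in R_1}y_w+\sum_{u\in L_2}y_u$ is ``at most roughly $j\beta$ more than what was forced by $R_1$ alone'' --- conflates the two distinct threats (extending $L_2$, which bounds spending on $L_0\cup L_2$ by $i\beta$; versus extending $R_3$, which bounds spending on $R_1$ by $(d+j)\beta$) and is not justified by either; as stated it is not one of the constraints the adversary can actually enforce.

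The second, deeper gap is structural: you analyze phase 2 once, after all of $R_1$ of fixed size $c_1 d$, and hope a one-variable optimization produces $e^{-2}$. In the paper's argument the phase-2/phase-3 analysis is applied to a \emph{hypothetical continuation after every prefix} of $R_1$: for each $i$, letting $X(i)$ be the spending on $L_0$ by the first $i$ vertices of $R_1$, the budget $i\beta$ shared with the forced $L_2$ potentials (with $\ell\approx i/\beta-d$) yields $X(i)\leq i\beta-\sum_{j=1}^{\ell}\bigl(1-\tfrac{(d+j)\beta}{i}\bigr)=d+i\beta-\tfrac{i}{2\beta}-\tfrac{\beta d^2}{2i}+O(1)$, valid for all $i$ because the algorithm is deterministic. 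This refined bound is then fed back into the \emph{phase-1} constraints $y_i\geq 1-X(i)/d$ and $\sum_{i\leq k}y_i\leq d\beta$; the term $\tfrac{\beta d}{2i}$ summed over $i$ up to $k=\beta d/\sqrt{2\beta^2-1}$ is a harmonic sum, and that logarithm is the sole source of the $e^{-2}$ in $\beta\geq\sqrt{\tfrac12(1+e^{-2})}$. Your plan has no analogue of this feedback loop and no harmonic sum anywhere (the phase-2 sum $\sum_j(1-(d+j)\beta/i)$ is quadratic in $\ell$, not logarithmic), so even after fixing the instance it would not reach $1.753$; your own closing admission that the ``middle step'' extracting $e^{-2}$ is unresolved is precisely the missing proof.
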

\begin{proof}
Again, let $d=|L_{0}|$. We extend the idea used in the proof of the
bound $1+\frac{1}{\sqrt{2}}$ for 2-alternation. Let $x_{i}$ be the
amount of resources spent on $L_{0}$ by the $i$-th vertex of $R_{1}$, i.e. the increment in the potential of $L_0$.
Let $y_{i}$ be its own potential. Then $y_{i}\geq 1-(x_{1}+\cdots+x_{i})/d$,
$x_{1}+\cdots+x_{i}\leq i\beta$ and $y_{1}+\cdots+y_{i}\leq d\beta$ by the argument used in the proof of Proposition 1.

The new idea is that in phase 2, assuming $|R_1| =i$,  at most $i\beta$ resources can be spent
on $L_{0}$ and $L_{2}$. (This is because the adversary can append infinitely many vertices to the current $L_2$.) Now consider the $j$-th vertex $u_j$ in $L_2$. Similar to Eqn.(\ref{[eqn:yv]}), 
we have $\sum_{v\in R_1} y_v \leq (d+j)\cdot \beta$ after processing $u_j$, since the adversary can append infinitely many online vertices to $R_3$. Consequently, $y_{u_j} \geq 1- \min\{y_v \mid v\in R_1 \} \geq 1- (d+j)\cdot \beta/i$ by the pigeonhole principle. Therefore,

\[
x_{1}+\cdots+x_{i}\leq i\beta-\sum_{j=1}^{\ell}\left(1-\frac{(d+j)\beta}{i}\right),
\]
where $\ell = |L_2|$.


Let $X(i)=x_1+x_2+\cdots+x_i$.
If $i \le d\beta$, we have $X(i) \leq i\beta$. When $i>d\beta$, by setting $\ell = \frac{i}{\beta} - d$, we have 
\begin{align}
X(i) &= i\beta - \left(1-\frac{d\beta}{i}\right)\ell+\frac{\beta}{2i}\ell^2+O(1) \nonumber\\
&= d+i\beta-\frac{i}{2\beta}-\frac{\beta d^2}{2i}+O(1).\nonumber
\label{eqn:Xi}
\end{align}

Notice that our bound on $X(i)$ holds for arbitrary $i$, since the adversary can arbitrarily manipulate the future input graph to fool the deterministic algorithm.

Since $y_i\geq 1-X(i)/d$ and $\sum_{i=1}^k y_i \leq d\beta$ for any $k$, we have

\[
\sum_{i=1}^k \left(1-\frac{X(i)}{d}\right) \leq d\beta.
\]

Let $\alpha = \frac{1}{\sqrt{2\beta^2-1}}$. By setting $k = \beta \alpha d$ and considering $i\leq d\beta$, $i>d\beta$ separately,
we get

\begin{align}
d\beta &\geq \sum_{i=1}^{d\beta} \left(1-\frac{i\beta}{d}\right) +\sum_{i=d\beta +1}^k \left(\frac{i}{2\beta d}+\frac{\beta d}{2i}-\frac{i\beta}{d}+O(1/d)\right)\nonumber
\end{align}

%

By taking $d\longrightarrow\infty$ and using $\sum_{i=1}^n 1/i\approx \ln n$, we have the desired result.

%

\end{proof}

%
%

\subsection{Upper bounds for the online matching problem}
Before establishing our last result on the upper bound for online bipartite matching with 2-alternation, we review how the bound $1-1/e$ is proved for the original problem (i.e. 1-alternation) as the same technique is used in a more complicated way. The next proof is a variant of that in \cite{Karp1990}.
\begin{proposition}
There is an upper bound of $1-1/e\approx 0.632$ for online bipartite matching (with 1-alternation).
\end{proposition}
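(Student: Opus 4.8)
The plan is to reprove the classical $1-1/e$ upper bound for online bipartite matching with one-sided arrivals, following the hard-instance construction of Karp, Vazirani and Vazirani. First I would fix a parameter $n$ and take the offline side $L=\{1,\dots,n\}$; the adversary reveals the online vertices $v_1,\dots,v_n$ one at a time so that $N(v_i)=\{i,i+1,\dots,n\}$ (equivalently, relabel so that when $v_i$ arrives the still-``free'' offline vertices are exactly its neighborhood). The perfect matching $v_i\mapsto i$ certifies $OPT=n$, so it suffices to show that any (randomized) algorithm matches only $(1-1/e+o(1))n$ edges in expectation.

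The key step is to bound, for each offline vertex $j$, the probability $p_j$ that $j$ is matched by the end. Let $q_i$ be the probability that $v_i$ is matched; since when $v_i$ arrives its available neighbors are a subset of $\{i,\dots,n\}$, the standard averaging/monotonicity argument gives the recurrence $q_i \le \tfrac{1}{n-i+1}\sum_{j\ge i}(1-p_j)$ together with $p_j=\sum_{i\le j} \Pr[v_i\text{ matched to }j]$, which after summing yields the bound $S_i := \sum_{j\ge i} p_j$ satisfying $S_i \le S_{i+1} + \big(1 - \tfrac{S_{i+1}}{n-i+1}\big)$-type inequalities. Unwinding this recursion (the textbook calculation) shows $\mathbb{E}[ALG] = \sum_j p_j \le n\big(1-(1-\tfrac1n)^n\big) \le (1-1/e)n + o(n)$. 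Since this holds for the deterministic algorithm conditioned on its randomness, it holds in expectation for randomized algorithms by Yao's principle / linearity, giving the claimed ratio $1-1/e$.

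The main obstacle — and the reason the excerpt says ``the same technique is used in a more complicated way'' later — is setting up the recursion cleanly so that the averaging step is rigorous: one must argue that among the neighbors of $v_i$ that are still unmatched, the algorithm cannot do better than matching to the ``least likely to be matched later'' vertex, and that the water-filling/fractional relaxation only helps the adversary. Concretely the care is in justifying $q_i \le \frac{1}{n-i+1}\sum_{j=i}^{n}(1-p_j)$: the quantity $1-p_j$ is the unconditional probability $j$ is unmatched at the end, but at step $i$ we need the probability it is unmatched \emph{so far}, which is at least $1-p_j$, so the inequality goes the right way. Once that monotonicity is pinned down, the remainder is the routine recurrence solve, so I would spend the bulk of the write-up on the construction and the averaging inequality and merely cite the recursion's solution.
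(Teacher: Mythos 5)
Your overall plan (the triangular hard instance, reducing to an averaging/symmetry argument, then a calculation giving $1-1/e$) matches the paper's, but the central inequality you rely on is false as stated, and the justification you give for it points in the wrong direction. Take $n=1$: a single offline vertex and a single online vertex adjacent to it. Greedy gives $q_1=1$ and $p_1=1$, while your inequality $q_i\leq\frac{1}{n-i+1}\sum_{j\geq i}(1-p_j)$ would demand $1\leq 0$. The defect is exactly the point you try to wave away: the probability that $j$ is still free when $v_i$ arrives is \emph{at least} $1-p_j$, so replacing availability by $1-p_j$ gives a smaller right-hand side and makes your claimed upper bound on $q_i$ stronger, not weaker --- the substitution goes the wrong way. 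Moreover, even with ``free at step $i$'' in place of $1-p_j$, the inequality is not valid in general: the probability that $v_i$ gets matched is governed by the event that \emph{at least one} neighbor is free, which can be close to $1$ even when the average availability is $1/2$. So the recurrence you build on, and the asserted solution $n\bigl(1-(1-\tfrac1n)^n\bigr)$, are not established. The appeal to Yao's principle is also under-specified: your description ``relabel so that the still-free vertices are exactly the neighborhood'' is an adaptive construction, fine against a deterministic algorithm but not by itself a fixed input distribution against which randomized algorithms can be analyzed.

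The correct way to finish along your lines is essentially the paper's proof. Fix the triangular instance ($v_i$ adjacent to $u_1,\dots,u_{n+1-i}$) with the offline labels randomly permuted; by exchangeability of the neighbors of $v_i$ with respect to the future, one may assume $\Pr[v_i \text{ matched to } u_j]=\frac{q_i}{n+1-i}$ for every neighbor $u_j$ (equivalently, pass to the induced \emph{fractional} solution $x_{v_i}=q_i$ split evenly). Then the single constraint $p_{u_1}\leq 1$, i.e.\ $\sum_i \frac{q_i}{n+1-i}\leq 1$, replaces your faulty recurrence, and maximizing $\sum_i q_i$ subject to it and $q_i\leq 1$ gives $\sum_i q_i \leq k$ with $\frac1n+\cdots+\frac1{n-k+1}\approx 1$, i.e.\ $k\approx n(1-1/e)$. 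In other words, the repairable core of your argument is the symmetry identity plus the LP constraint $p_j\leq 1$, which is exactly the paper's argument; as currently written, the proposal has a genuine gap at its key step.
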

\begin{proof}
Again, we can consider only the fractional version of the problem and deterministic algorithms. Suppose that an algorithm maintains a fractional matching $x$. Let $L=\{ u_1,...,u_n\}$ and $R=\{ v_1,...,v_n\}$, with $v_i$ adjacent to $u_1,...,u_{n+1-i}$. The size of the maximum matching is clearly $n$. Let $v_1,...,v_n$ be the order in which the online vertices arrive.

Observe that when $v_i$ arrives, $u_1,...,u_{n+1-i}$ are indistinguishable from each other. Thus $x_{v_i}:=\sum_{u\in N(v_i)} x_{uv_i}$ should be evenly distributed to $u_1,...,u_{n+1-i}$, i.e. $x_{uv_i}=\frac{x_{v_i}}{n+1-i}$. This argument can be made formal by considering graphs isomorphic to $G$ with the labels of vertices in $L$ being randomly permuted.

Thus, after processing $v_k$ we have $$x_{u_i}=\frac{x_{v_1}}{n}+...+\frac{x_{v_k}}{n+1-k}.$$ Moreover, the size of the matching found is $x_{v_1}+\cdots+x_{v_n}$ and $x_{v_1},\cdots,x_{v_n}$ satisfy $\frac{x_{v_1}}{n}+\cdots+\frac{x_{v_n}}{1}\leq 1$.

Viewing the above as a LP, it is easy to see that $x_{v_1}+...+x_{v_n}$ is maximized when $x_{v_n},...,x_{v_k}=1,x_{v_{k+1}},...,x_{v_n}=0$ and $\frac{1}{n}+...+\frac{1}{n-k+1}\approx 1$ for some $k$. Now when $n$ is large, $\frac{1}{n}+...+\frac{1}{n-k+1}\approx \ln\frac{n}{n-k}$.

Finally, $$x_{v_1}+\cdots+x_{v_n}=k=n(1-1/e)=(1-1/e) \cdot OPT.$$
\end{proof}

\begin{proposition}
There is an upper bound of 0.6252 for the online matching problem in bipartite graphs with 2-alternation.
\end{proposition}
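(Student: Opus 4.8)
The plan is to mimic and extend the construction from Proposition 4 (the $1-1/e$ bound for $1$-alternation), but now exploiting a second online phase on the left side. As before, it suffices to bound a deterministic algorithm maintaining a fractional matching $x$, and by symmetrizing (randomly permuting vertex labels within each arriving batch, as in the proof of Proposition 4) we may assume the algorithm spreads the weight it assigns to a new online vertex evenly among its currently-available symmetric neighbors. First I would set up a three-layer graph: offline left vertices $L_0$, then a batch $R_1$ of online right vertices arriving in phase 1 (each adjacent to a nested shrinking prefix of $L_0$, exactly as in Proposition 4), then a batch $L_2$ of online left vertices in phase 2, each adjacent to a nested shrinking suffix of $R_1$. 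The nesting is chosen so that the optimal (offline) matching is large — essentially a perfect matching across the three layers — while the online algorithm, having committed weight in phase 1 before seeing $L_2$, is forced to ``waste'' capacity: the $x$-weight it poured onto the $R_1$ vertices from $L_0$ cannot later be redirected to serve the $L_2$ vertices.

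The key steps, in order: (1) Fix the incidence structure and compute $OPT$; by construction $OPT$ is $(2+o(1))d$-ish or more precisely chosen so the ratio comes out to the claimed $0.6252$ — I would parametrize $|L_0|=d$, $|R_1|=c_1 d$, $|L_2|=c_2 d$ and keep the constants symbolic until the end. (2) Using the symmetrization argument, write down the evolution of $x_{u}$ for $u\in L_0$ and $x_v$ for $v\in R_1$ as sums of the ``per-arrival'' weights divided by the number of live neighbors, exactly paralleling the formula $x_{u_i}=\frac{x_{v_1}}{n}+\cdots+\frac{x_{v_k}}{n+1-k}$ in Proposition 4, but now in two stages — phase 1 fills $L_0$-capacity and $R_1$-weight, phase 2 fills $R_1$-capacity (further) and $L_2$-weight. (3) Impose the capacity constraints $x_u\le 1$ for all vertices as linear inequalities on the phase-1 weights $(a_i)$ and phase-2 weights $(b_j)$. (4) The total matching found is $\sum a_i+\sum b_j$; maximize this linear objective subject to the constraints. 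As in Proposition 4, the optimum is a threshold/``bang-bang'' solution — set the per-arrival weights to their maximum until a harmonic-sum constraint $\frac1n+\cdots\approx 1$ saturates — and replacing harmonic sums by logarithms in the $d\to\infty$ limit turns everything into an optimization over the constants $c_1,c_2$ and the threshold locations. (5) Optimize the resulting elementary (but messy) expression in those constants to obtain the bound $0.6252$.

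The main obstacle will be step (4)–(5): unlike the single-phase case, there are now two coupled families of weights and two families of capacity constraints (the $R_1$ vertices have capacity used in \emph{both} phases), so the LP whose value must be computed is genuinely two-dimensional, and its optimal solution has a more intricate threshold structure — likely a threshold index in $R_1$ separating vertices that got saturated in phase 1 from those that get topped up in phase 2, plus a threshold in $L_2$. Getting the limiting (as $d\to\infty$) value of this LP correct, and then choosing $c_1,c_2$ to make the adversary strongest, is where all the real work and the precise constant $0.6252$ come from; everything else is a routine adaptation of Proposition 4. I would also double-check that the symmetrization step is still legitimate across \emph{two} online batches — it is, since within each batch the relevant vertices are exchangeable given the history, so one can average over permutations batch by batch.
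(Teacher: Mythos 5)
Your construction and the symmetrization/harmonic-sum machinery are in the same spirit as the paper, but your plan omits the one idea the paper's proof actually turns on, and without it the approach cannot reach $0.6252$ (indeed it need not even beat $1-1/e$). In the paper's instance $|L_0|=|L_2|=n$, $|R_1|=2n$, and the first $n$ vertices of $R_1$ are adjacent to \emph{all} of $L_0$; the crucial step is that once these $n$ vertices have arrived, the current optimum is already $n$, so a $\gamma$-competitive algorithm must \emph{already} have matched at least $\gamma n$, because the adversary may simply stop there. This forced early commitment (average load $\gamma$ on $L_0$ and on the first block of $R_1$) is what poisons both subsequent triangular phases, each of which can then add only about $n(1-e^{-(1-\gamma)})$ by the Proposition~4 argument, and the bound comes from the self-referential inequality $2n\gamma\le n\gamma+2n\bigl(1-e^{-(1-\gamma)}\bigr)$, whose root is $\approx 0.6252$. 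Your plan instead fixes one three-layer instance and maximizes the total matching subject only to capacity and even-spreading constraints. Nothing in that LP forces any commitment in phase 1: an algorithm tailored to the known instance family can hold back on the $R_1$ vertices it knows $L_2$ will later want and hedge optimally, so the optimization yields a plain ratio, not an inequality involving $\gamma$ on both sides. Concretely, if one applies your relaxation to the paper's own instance but drops the prefix-competitiveness constraint, the algorithm commits only $1-\ln 2$ per vertex on the complete block and the resulting bound degrades to roughly $0.653>1-1/e$, i.e.\ the second phase buys nothing. The missing ingredient is precisely the use of prefix instances (the adversary's threat to terminate) to force wasteful early commitment measured in terms of the competitive ratio itself.

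A secondary point: your justification of symmetrization (``within each batch the relevant vertices are exchangeable given the history'') is aimed at the wrong side. What must be exchangeable in phase 2 are the $R_1$ vertices occurring in the $L_2$-neighborhoods, and these are distinguishable to the algorithm through their arrival times, their degrees to $L_0$, and the weight it has already placed on them. The paper's construction makes the relevant block of $R_1$ structurally symmetric (every vertex adjacent to all of $L_0$), and one should additionally let the adversary draw the nesting within that block uniformly at random, independently of phase 1, before the even-spreading reduction is legitimate; with a fully triangular $L_0$--$R_1$ layer as you propose, the suffix vertices of $R_1$ are not symmetric and this step needs a genuine argument rather than a remark.
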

\begin{proof}
Again, we can consider only the fractional version of the problem and deterministic algorithms. Suppose that an algorithm is $\gamma$-competitive and maintains a fractional matching $x$.

Let $|L_0|=|L_2|=n,|R_1|=2n$. The first $n$ vertices of $R_1$ are adjacent to all vertices in $L_0$. The two subgraphs induced by $L_0$ \& the last $n$ vertices of $R_1$ and $L_1$ \& the first $n$ vertices of $R_1$ are isomorphic to the graph used in the proof of the last theorem. Note that the size of maximum matching is $2n$.

The most important observation here is that after processing the first $n$ vertices of $R_1$, the fractional matching found must have size at least $n\gamma$ as the current optimal solution has size $n$. In other words, we have $$x_{u_{1}^{(1)}}+...+x_{u_{n}^{(1)}}=x_{v_{1}^{(1)}}+...+x_{v_{n}^{(1)}}\geq n\gamma$$after the first $n$ vertices of $R_1$ arrive.

Now the next $n$ vertices of $R_1$, by the same reasoning in the last theorem, are matched to the extent of $k$ such that $\gamma + \frac{1}{n}+...+\frac{1}{n+1-k}\approx 1$, from which we obtain $k=n(1-1/e^{1-\gamma})$. Similarly, $L_2$ is also matched to an extent of $n(1-1/e^{1-\gamma})$.

Putting all the pieces together, we have the inequality $$\frac{n\gamma + 2n(1-1/e^{1-\gamma})}{2n}\geq \gamma\Rightarrow 1-\frac{1}{e^{1-\gamma}}-\frac{\gamma}{2}\geq0.$$

The function $1-\frac{1}{e^{1-\gamma}}-\frac{\gamma}{2}$ is decreasing and has root approximately at 0.6252.
\end{proof}

\section{Discussion and open problems}
%


We presented the first nontrivial algorithm for the online fractional matching and vertex cover problems in graphs where all vertices arrive online. A natural question is whether our competitive ratios, 1.901 and 0.526, are optimal for these two problems. For the special case of the bipartite graphs, can we extend our charging-based framework to get improved algorithms?

Another interesting problem is to beat the greedy algorithm for the online {\em integral} matching problem in bipartite graphs or even general graphs. Very recently, the connection between the optimal algorithms for online bipartite integral and fractional matching was established via the randomized primal-dual method~\cite{devanurrandomized}. This is promising as the techniques developed may also be applicable to our problem. However, it seems quite difficult to reverse-engineer an algorithm for online integral matching based on the analysis of our algorithm.


For online integral vertex cover, as mentioned earlier, there is essentially no hope to do better than 2 assuming the Unique Game Conjecture. Nevertheless, it will still be interesting to obtain an unconditional online hardness result which could be easier than the offline counterpart.

Finally, our discussion has been focused on the {\em oblivious adversary} model. It would be interesting to study our problems in weaker adversary models, i.e., stochastic~\cite{Feldman2009,Manshadi2011} and random arrival models~\cite{Mahdian2011,Karande2011}.


{\noindent \bf Acknowledgments:}  We thank Michel Goemans for helpful discussions and Wang Chi Cheung for comments on a previous draft of this paper.

\bibliography{online}
\bibliographystyle{plain}

\appendix

\section{Reduction from Multislope Ski Rental to Online weighted Bipartite Vertex Cover}
\label{sec:multislope}
There is a total of $n$ states $[n]$ in the multislope ski rental problem. Each state $i$ associated with buying cost $b_i$ and rental cost $r_i$. As argued in [.], we may assume that we start in state 1 and have $0=b_1\leq b_2\leq \ldots\leq b_n, r_1\geq r_2\geq \ldots\geq r_n\geq 0$. The game starts at time 0 and ends at some unknown time $t_{end}$ determined by the adversary.

At each time $t\in [0,t_{end}]$, we can transition from the current state $i$ to some state $j>i$. Let state $f$ be the final state at time $t_{end}$. The total cost incurred is given by $$b_f+\sum_{i=1}^f x_i r_i,$$where $x_i$ is the amount of time spent in state $i$. The classical ski rental problem corresponds to $n=2$ and $b_1=0,b_2=B,r_1=1,r_2=0$.

Consider now the discrete version of this problem. We discretize time into consecutive intervals of length $\epsilon$ for some small $\epsilon >0$. At the beginning of each interval, we can stay in the current state $i$ or transition from $i$ to some state $j>i$. Each of the two choices correspond to a cost of $r_i\epsilon$ or $b_j-b_i+r_j\epsilon$.


We are ready to describe the reduction to online vertex-weighted bipartite vertex cover. Let $L=\{1,2,\ldots,n\}$ with weights $w_i=b_{i+1}-b_i$ for $i<n$ and $w_n=\infty$. The $(qn+r)$-th online vertex $v_{qn+k}\in R$, where $q$ is a nonnegative integer and $1\leq k\leq n$, has weight $(r_{k}-r_{k+1})\epsilon$ (with $r_{n+1}=0$) and is adjacent to the left vertices $1,\ldots,k$.

Intuitively, the $(q+1)$-th time interval is represented by the online vertices $qn+1,\ldots,qn+n$. If we are in state $i$, (1) the left vertices $1,\ldots,i-1$ should be covered and have total weight $b_i-b_0=b_i$ and, (2) the online vertices $qn+i\ldots qn+n$ should be covered and have total weight $r_i\epsilon$. Thus when we transition from state $i$ to state $j>i$, the vertices $i,\ldots,j-1$ should be added to the cover. Moreover, the left vertex $n$, which has infinite weight, is used to ensure that the algorithm is forced to put the online vertex $qn+n$, which has weight $r_n$, into the cover.

Finally, we show that a $c$-competitive algorithm for online vertex-weighted bipartite vertex cover gives a $c$-competitive algorithm for multislope ski rental under the above reduction. Consider the vertex cover maintained by the algorithm after processing online vertices $qn+1,\ldots,qn+n$. Suppose that $1,\ldots,i-1$ are in the cover but $i$ is not. Then the online vertices  $qn+i\ldots qn+n$ must also be in the cover. Thus we can simply stay in (or transition to if the previous state is smaller) state $i$. It is clear that this strategy is valid by the preceding discussion. Furthermore, the cost incurred by the algorithm for multislope ski rental is no greater than the counterpart for vertex cover.

\section{Proof of Theorem 2}
We modify GreedyAllocation as follows. The only difference is that $\sum_{u\in N(v)\cap T} \max\{y-y_u,0\} \leq f(y)$ is replaced by $\sum_{u\in N(v)\cap T} w_u\max\{y-y_u,0\} \leq w_vf(y)$.

\begin{algorithm}[h!]
\SetAlgoLined
\caption{$GreedyAllocation$ with allocation function $y+\alpha$}
\label{alg:general greedy}
\KwIn{Online graph $G=(V,E)$ with offline vertices $U\subset V$}
\KwOut{A fractional vertex cover of $G$}
Initialize for each $u\in U$, $y_u = 0$\;
Let $T$ be the set of known vertices. Initialize $T=U$\;
\For{each online vertex $v$}
{
Maximize $y\le 1$, s.t., $\sum_{u\in N(v)\cap T} w_u\max\{y-y_u,0\} \leq w_v(y+\alpha)$\;
For each $u\in N(v)\cap T$, $y_u \leftarrow \max\{ y_u, y\}$\;
$y_v \leftarrow 1-y$\;
$T\leftarrow T\cup \{v\}$\;
}
Output $\{y_v\}$ for all $v\in V$\;
\end{algorithm}

To analyze the algorithm, we need the following lemma which is an easy extension of Lemma~\ref{lem:charging}.

\begin{lemma}
\label{lem:weightedcharging}
Let $f:[0,1]\longrightarrow\mathbb{R}_+$ be continuous such that $\frac{1-t}{f(t)}$ is decreasing, and $F(x)=\int_0^x  \frac{1-t}{f(t)}\mathrm{d}t$. If $\sum_{u\in X} w_u(y-y_u)= w_vf(y)$ for some set $X$ and $y\geq y_u$ for $u\in X$, then
$$w_v(1-y)\leq \sum_{u\in X} w_u\left(F(y)-F(y_u)\right).$$
\end{lemma}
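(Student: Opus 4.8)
The plan is to mimic the proof of Lemma~\ref{lem:charging} verbatim, carrying the weights $w_u$ through each step. First I would fix $u\in X$ and use the hypothesis that $\frac{1-t}{f(t)}$ is decreasing: for every $t$ in the interval $[y_u,y]$ we have $\frac{1-t}{f(t)}\ge \frac{1-y}{f(y)}$, so integrating over $[y_u,y]$ gives $F(y)-F(y_u)=\int_{y_u}^{y}\frac{1-t}{f(t)}\,\mathrm{d}t \ge (y-y_u)\frac{1-y}{f(y)}$. Multiplying by $w_u\ge 0$ and summing over $u\in X$ yields
\[
\sum_{u\in X} w_u\bigl(F(y)-F(y_u)\bigr)\ \ge\ \frac{1-y}{f(y)}\sum_{u\in X} w_u(y-y_u).
\]

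Next I would substitute the hypothesis $\sum_{u\in X} w_u(y-y_u)=w_v f(y)$ into the right-hand side, so that the factors $f(y)$ cancel and the bound becomes $\frac{1-y}{f(y)}\cdot w_v f(y)=w_v(1-y)$, which is exactly the claimed inequality. This is essentially the only step in the argument; the whole proof can be presented as a two-line display analogous to the one in Lemma~\ref{lem:charging}, with $w_u$ inserted in the sums and $w_v f(y)$ replacing $f(y)$ at the end.

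I do not anticipate any real obstacle: the only place the weights interact with the inequality is through nonnegativity (so the termwise inequality survives summation) and through the weighted allocation constraint $\sum_{u\in X}w_u(y-y_u)=w_v f(y)$, which is precisely what is needed to make the cancellation go through. One minor point worth stating explicitly is that $f(y)>0$, so dividing by $f(y)$ (implicit in writing $\frac{1-y}{f(y)}$) is legitimate; this is guaranteed since $f$ maps into $\mathbb{R}_+$. Everything else is routine.
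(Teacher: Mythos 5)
Your proof is correct and is exactly the argument the paper intends: the paper states this lemma as an easy extension of Lemma~\ref{lem:charging} without writing out the details, and your two-line weighted version (termwise bound $F(y)-F(y_u)\ge (y-y_u)\frac{1-y}{f(y)}$ from monotonicity, then summing with weights and using $\sum_{u\in X}w_u(y-y_u)=w_vf(y)$) is precisely the same approach with $w_u$ carried through. Your remarks on $w_u\ge 0$ and $f(y)>0$ are the right minor points to make explicit.
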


We only give a sketch of the charging scheme as it is very similar to the unweighted case.

We charge the potentials used to the vertices of the minimum cover $C^*$. Let $v$ be an online vertex. The case $v\in C^*$ is trivial.

Now consider the case $v\notin C^*$. We charge the potential spent on $u\in N(v)\subseteq C^*$ to $u$ itself. The potential spent on $v$ is $y_v = w_v(1-y)$ where $y$ is the final water level. 
Let $X\subset N(v)$ be the set of vertices whose potentials increase when processing $v$.
If $y =1$, we are done. If $y <1$, we have $\sum_{u\in X} w_u(y-y_u) = w_v(\alpha +y)$, where $y_u$ is the potential of $u$ before processing $v$. By Lemma~\ref{lem:weightedcharging}, $w_v(1-y) \leq \sum_{u\in X} w_u(G(y)-G(y_u))$.

Now the charges to each $u\in C^*\cap R$ are at most $1+G(1)-G(0)=1+\alpha$.

\section{ Proof of Theorem 6}
The proof is an extension of the one for Theorem~\ref{thm:pdgeneral}. We analyze the following algorithm using the primal-dual method. It is also possible to give a charging-based analysis of the online vertex-weighted vertex cover part of the algorithm.

The function $f$ below is the same as that for Theorem~\ref{thm:pdgeneral}. Recall that $\beta\geq 1+f(1-z)+\int_z^1 \frac{1-t}{f(t)}\mathrm{d}t$ for $z\in [0,1]$.

\begin{algorithm}[h!]
\SetAlgoLined
\caption{$PrimalDual-Weighted$}
\label{alg:general greedy weighted}
\KwIn{Online graph $G=(V,E)$, weights/capacities $w_v,v\in V$}
\KwOut{A fractional vertex cover $\{y_v\}$ of $G$ and a fractional capacitated matching $\{x_{uv}\}$.}
Let $T$ be the set of known vertices. Initialize $T=\emptyset$\;
\For{each online vertex $v$}
{
Maximize $y\le 1$, s.t., $\sum_{u\in N(v)\cap T} w_u\max\{y-y_u,0\} \leq w_vf(y)$\;
Let $X = \{u \in N(v)\cap T \,\mid\, y_u < y\}$\;
\For{each $u\in X$}
{
$y_u\leftarrow y$\;
$x_{uv}\longleftarrow \frac{w_u(y-y_u)}{\beta}\left(1+\frac{1-y}{f(y)}\right)$\;
}
For each $u \in (N(v)\cap T)\setminus X$, $x_{uv}\longleftarrow 0$\;
$y_v \leftarrow 1-y$\;
$T\leftarrow T\cup \{v\}$\;
}
Output $\{y_v\}$ for all $v\in V$\;
\end{algorithm}

We claim that the following two invariants hold:

\textbf{Invariant 1:} $$w_u\cdot\frac{y_{u}+f(1-z_u)+\int_{z_u}^{y_{u}}\frac{1-t}{f(t)}\mathrm{d}t}{\beta}\geq x_{u},$$ where $z_u$ is the potential of $u$ set upon its arrival, $y_u$ is the current potential of $u$ and $x_u = \sum_{v \in N(u)} x_{uv}$ is the sum of the potentials on the edges incident to $u$. Note that the LHS is at most 1  by the definition of $\beta$, which guarantees that the primal is feasible as long as the invariant holds.

\textbf{Invariant 2:} $$\sum_{u\in T} w_uy_{u}=\beta\sum_{(u,v)\in E\cap (T\times T)} x_{uv}$$

We sketch why these two invariants are preserved after processing each vertex $v$. The proof is almost identical to the unweighted case.

Invariant 2: The dual increment is $$w_v(1-y)+\sum_{u\in X}w_u(y-y_u)$$ and the primal increment is $$\sum_{u\in X} \frac{w_u(y-y_u)}{\beta}\left( 1 + \frac{1-y}{f(y)}\right).$$
Thus it suffices to show that $w_v(1-y)=\sum_{u\in X}w_u(y-y_u)\frac{1-y}{f(y)}$. 
When $y=1$, the statement trivially holds. On the other hand, when $y<1$, by construction, we have
$\sum_{u\in X}w_u(y-y_u)=w_vf(y)$.

Invariant 1: We first show that Invariant 1 still holds for $x_v$. Note that $x_v=\sum_{u\in X}x_{uv}$ is just the increase in the primal objective value. By Invariant 2, we have
$$x_v=\frac{w_v(1-y)+\sum_{u\in X}w_u(y-y_u)}{\beta}\le \frac{w_v}{\beta}(z_u + f(1-z_u)).$$

We now show that Invariant 1 is preserved for each $u\in V$. By Invariant 1, the previous $x_u$ satisfies $$x_u-x_{uv}\leq w_u\frac{y_{u}+f(1-z_u)+\int_{z_u}^{y_{u}}\frac{1-t}{f(t)}\mathrm{d}t}{\beta}.$$
This proof is finished by noticing that $$x_{uv}=\frac{w_u(y-y_u)}{\beta}\left( 1 + \frac{1-y}{f(y)}\right)\leq  \frac{w_u}{\beta} \left( y-y_u+ \int_{y_u}^y \frac{1-t}{f(t)}
\mathrm{d}t\right),$$ as $\frac{1-t}{f(t)}$ is a decreasing function.

\end{document}